\newtheorem{theorem}{Theorem}[section]
\newcommand{\ket}[1]{\left| {#1} \right\rangle}
\newcommand{\bra}[1]{\left\langle {#1} \right|}
\newcommand{\ketbra}[2]{\left| {#1} \right\rangle \!\! \left\langle {#2} \right|}
\newcommand{\tr}[1]{\mbox{$\mathrm{Tr}\left(#1\right)$}}
\newcommand{\Id}{\mathds{1}}
\newcommand{\cF}{\mathcal{F}}
\newcommand{\cH}{\mathcal{H}}
\newcommand{\be}{\begin{equation}}
\newcommand{\ee}{\end{equation}}
\newcommand{\ma}{\mathsf{a}}
\newcommand{\mb}{\mathsf{b}}
\newcommand{\mx}{\mathsf{x}}
\newcommand{\my}{\mathsf{y}}
\definecolor{darkgreen}{RGB}{0,121,0}
\newcommand{\stkout}[1]{\ifmmode\text{\sout{\ensuremath{#1}}}\else\sout{#1}\fi}
\begin{document}

\title{Self-testing with finite statistics enabling the certification of a quantum network link}
\author{Jean-Daniel Bancal}
\affiliation{Université Paris-Saclay, CEA, CNRS, Institut de Physique Théorique, 91191, Gif-sur-Yvette, France}
\affiliation{Group of Applied Physics, University of Geneva, 1211 Geneva 4, Switzerland}
\affiliation{Quantum Optics Theory Group, Universität Basel, CH-4056 Basel, Switzerland}
\author{Kai Redeker}
\affiliation{Fakultät für Physik, Ludwig-Maximilians-Universität, 80799 München, Germany}
\author{Pavel Sekatski}
\affiliation{Quantum Optics Theory Group, Universität Basel, CH-4056 Basel, Switzerland}
\affiliation{Group of Applied Physics, University of Geneva, 1211 Geneva 4, Switzerland}
\author{Wenjamin Rosenfeld}
\affiliation{Fakultät für Physik, Ludwig-Maximilians-Universität, 80799 München, Germany}
\affiliation{Max-Planck-Institut für Quantenoptik, Hans-Kopfermann-Strasse 1, 85748 Garching, Germany}
\author{Nicolas Sangouard}
\affiliation{Université Paris-Saclay, CEA, CNRS, Institut de Physique Théorique, 91191, Gif-sur-Yvette, France}
\affiliation{Quantum Optics Theory Group, Universität Basel, CH-4056 Basel, Switzerland}

\date{16 February 2021}
\begin{abstract}
\noindent Self-testing is a method to certify devices from the result of a Bell test. Although examples of noise tolerant self-testing are known, it is not clear how to deal efficiently with a finite number of experimental trials to certify the average quality of a device without assuming that it behaves identically at each run. As a result, existing self-testing results with finite statistics have been limited to guarantee the proper working of a device in just one of all experimental trials, thereby limiting their practical applicability. We here derive a method to certify through self-testing that a device produces states on average close to a Bell state without assumption on the actual state at each run. Thus the method is free of the I.I.D. (independent and identically distributed) assumption. Applying this new analysis on the data from a recent loophole-free Bell experiment, we demonstrate the successful distribution of Bell states over 398 meters with an average fidelity of $\geq$55.50\% at a confidence level of 99\%. Being based on a Bell test free of detection and locality loopholes, our certification is evidently device-independent, that is, it does not rely on trust in the devices or knowledge of how the devices work. This guarantees that our link can be integrated in a quantum network for performing long-distance quantum communications with security guarantees that are independent of the details of the actual implementation.
\end{abstract}

\maketitle
\section{Introduction}
The distribution of entanglement over long distances is a key challenge in extending the range of quantum communication and building quantum networks~\cite{Kimble08,Gisin07}. The direct transmission of entangled states through optical fibers is a viable solution for short distances but is limited by transmission loss. Quantum repeaters have thus been proposed for entanglement distribution over long distances\footnote{For protocols that do not rely on entanglement creation in independent links, but are based exclusively on error correction codes, see~\cite{Knill96}.}~\cite{Briegel98, Sangouard11}. The basic idea is to divide the global distance into short elementary network links. Entanglement is created in each link and successive entanglement swapping operations are used to combine links and extend entanglement. The key to efficient quantum repeater is to use elementary links where i) the successful creation of entanglement is heralded and ii) entanglement is stored so that it can be created in each link independently. Impressive progress along this line now allows one to envision multipartite quantum networks where entanglement is distributed between arbitrary parties.~\cite{Pirandola16, Schoute16, Pant17, Wallnofer16, Epping16, Das17}\\

At the heart of quantum networks lies the ability to distribute but also certify an entangled state between two distant locations. Although entangled states have been produced between remote locations forming an elementary network link in multiple experiments~\cite{Matsukevich06, Chou07, Yuan08, Moehring07, Matsukevich08, Lee11, Ritter12, Hofmann12, Bernien13}, their suitability for general purposes including -- but not limited to -- quantum key distribution, remains unproven. Demonstrations based on the qubit assumption for instance, stating that all elements involved are of dimension two, are subject to side-channels which completely corrupt the security guarantees~\cite{Acin06}. \\

More generally, the identification of a quantum state provides the most complete description of a system. But the trace left by a state in the measurement outcomes is as much influenced by the state as by the measurement itself. Consequently, it is challenging to obtain an accurate description of a quantum state from observed statistics without presuming a detailed description of the measurement apparatus. Yet, characterizing univocally a quantum resource by identifying its quantum state constitutes a crucial step to set quantum technologies on a solid stand.\\

The possibility of device-independent state characterization which is not relying on assumptions on the dimension of the Hilbert space and on the correct calibration or modelling of the measurements~\cite{Scarani12} was first realized in Ref.~\cite{Popescu92, Braunstein92}. There it was noted that the only quantum states able to achieve a maximum violation of the Bell-CHSH inequality~\cite{CHSH69}, are Bell states -- two-qubit maximally entangled states. Interest in self-testing however only started growing significantly after Mayers and Yao rediscovered it and showed that it provides security for quantum key distribution~\cite{Mayers98,Mayers04}. Since then, it has been understood that self-testing guarantees the security of many quantum information tasks, including randomness generation~\cite{Colbeck09,Pironio10,Bamps16} and delegated quantum computing~\cite{Reichard13}; see~\cite{Supic19} for more details. Therefore, self-testing a state guarantees its direct applicability for a wide range of applications. Motivated by this perspective, further theoretical self-testing results have been obtained lately, addressing an increasing range of states, and with improving tolerance to noise~\cite{Tomamichel13, Bardyn09, Miller13, Yang14, Kaniewski16, McKague13, Bamps16, Natarajan17, Coladangelo17, Supic17}. Moreover, self-testing has also been extended to the characterization of quantum measurements and channels~\cite{Mayers04, Magniez06, Reichard13, Bancal15, Dall'Arno17, Kaniewski17, Sekatski18, Wagner19}.\\

In the case of Bell states, it is now known that self-testing based on the Bell-CHSH inequality is strongly resistant to noise~\cite{Bardyn09,Yang14,Kaniewski16}. Recently, this led to the experimental estimation of self-testing fidelities from the perspective of hypothesis testing, in which the null hypothesis to be rejected is that the source only produces states with a fidelity below a fixed threshold value~\cite{Tan17}. Rejection of this hypothesis then implies that at least one state produced by the source had a fidelity higher than the threshold value. However, the implications of hypothesis testing to practical protocols is not clear since no statement on the average fidelity is provided. As an example, data from the experiment involving two individual ions separated by $\approx1$ m~\cite{Pironio10} were shown to lead to a significant rejection of the null hypothesis even though the average Bell violation that could be certified with the methods presented in~\cite{Pironio10} at 99\% confidence level is lower than $2.05$, hence precluding a conclusion on the average Bell state fidelity~\cite{Valcarce19}. A higher Bell violation was demonstrated between two ions separated by about 340 $\mu$m within a trap~\cite{Tan17}, but this short-distance setting is not directly applicable for quantum networks. The recent advent~\cite{Hensen15, Shalm15, Giustina15, Rosenfeld17} of loophole-free Bell tests~\cite{Bell64} involving large separations between entangled particles opens a new perspective for device-independent certification of states distributed in quantum networks.\\

We here derive a method that provides a confidence interval on the \emph{average} violation of any binary Bell inequality without assuming that the trials are independent and identically distributed. Applying this method to the CHSH-Bell test allows us to certify that a source has the capability of producing \emph{on average} states close to a Bell state without making any assumption on the actual state at each trial. This changes the status of self-testing from a mere theoretical tool to a practical certification technique. We show this by considering the data used on the loophole-free Bell inequality violation reported in Ref.~\cite{Rosenfeld17} where entanglement is distributed in a heralded way and stored in two single atoms trapped at two locations separated by 398m before a Bell test is performed. We first optimize the heralding conditions using an ab-initio model of the entanglement generation process, hence improving on the entanglement fidelity of the data set in \cite{Rosenfeld17}. We then apply our new statistical tool accounting for finite experimental statistics and imperfections of the random number generators. From the observed Bell-CHSH value, we certify the successful distribution over 398~m of an entangled state with a Bell state fidelity of 55.50\% at a confidence level of 99\%. This constitutes the first result where a statistically relevant bound on the average fidelity of the distributed state is obtained directly from the Bell-CHSH value and the first device-independent certification of an elementary link for a quantum network.\\

\section{Device-independent assumptions}
The scenario we consider involves three protagonists, colloquially referred to as Alice, Bob and Charlie, see Fig.~\ref{fig:ExpScheme}. Charlie holds a preparation device which indicates when the experiment is ready: it heralds the start of every measurement procedure. The two other parties each hold one measurement device and one random number generator device. Upon heralding, the random number generators are used by Alice and Bob to choose a measurement setting which is applied to their measurement devices. Measurement settings and outcomes are recorded locally for later analysis. The claim of self-testing for the state measured is based on a number of assumptions that we review now.

\begin{enumerate}
\item The experiment admits a quantum description. Essentially, the state of the system can be represented in terms of a density operator, and the measurements as operators acting on the same Hilbert space with the appropriate tensor structure.
\item All devices mentioned above are well identified in space and operate sequentially in time. In particular, the separation between the parties Alice and Bob is clear, as well as between the random number generators and the measurement devices of each party. Moreover, results are recorded before going to the next round, hence we know exactly when a round is going on (two rounds don't happen simultaneously), when it is finished, and we can
monitor how many rounds happened in a given time.
\item The random number generators are independent from all other devices and sample from a well characterized probability distribution. Hence, the measurements used are chosen freely: the measured particles cannot influence this choice, nor vice versa. The random number devices can be correlated to each other, but not to the rest of the setup.
\item Finally, the classical and quantum communication between Alice and Bob is limited: no communication (whether direct or indirect) is allowed between the measurement boxes once the settings choices are received and until the measurement outcomes are produced. Moreover, the random number generators only provide the choice of measurement setting when required, and to their respective measurement device. (Note that space-like separation can be used to guarantee the condition of no communication between Alice and Bob.)
\end{enumerate}

Apart from the first assumption, which has not been challenged by any experiment so far, note that all three remaining assumptions concern the \textit{relation} between the various devices involved in the experiment rather than their internal working. This approach is thus often called ``black-box" or ``device-independent". For a physical setup permitting for self-testing these assumptions are requirements. The settings and requirements for self testing are sufficient to test a Bell inequality and they have been used recently in Ref.~\cite{Rosenfeld17} to perform a loophole-free violation of the Bell-CHSH inequality. We briefly present this experiment in the next section. \\

\begin{figure*}
\centering
\includegraphics{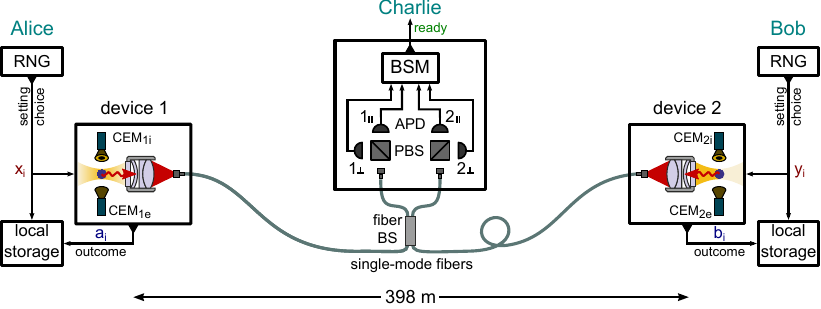}
\caption{Sketch of self-testing based on violation of Bell's inequality with entangled atoms separated by a large distance. Each "device" of Alice and Bob is an independent apparatus for trapping and manipulating single atoms. Entanglement between the atoms is generated by entangling the spin of each atom with polarization of a single photon. The photons are coupled into single mode fibers and overlapped at a fiber beamsplitter. Coincident detection of two photons in Charlie's device heralds the entanglement. Alice and Bob then use their random number generators (RNGs) to select a measurement setting for fast and efficient read-out of the atomic state based on state selective ionization using particle detectors (CEMs) to detect the created ions (i) and electrons (e).}
\label{fig:ExpScheme}
\end{figure*}

\section{Event-ready CHSH-Bell test with neutral atoms}
In our experiment, Alice and Bob's stations are made each with a single $^{87}$Rb atom stored in an optical dipole trap, see Fig.~\ref{fig:ExpScheme}. The two setups are independently operated, that is, they are equipped with their own laser and control systems. Two Zeeman states $|m_F=\pm1\rangle$ of the ground state manifold $5^2S_{1/2}$ are used as 1/2-spin states. After an initial state preparation, the atoms are optically excited to emit a photon whose polarization is entangled with the atomic spin states, see Fig.~\ref{fig:Levels}(a). The photons are sent to Charlie's station that is located close to Alice's location where a Bell state measurement is implemented with a beamsplitter followed by a polarizing beamsplitter at each output port and four single photon detectors, see Fig.~\ref{fig:ExpScheme}. The atom excitation procedure is synchronized on a timescale that is much shorter than the photon duration. Careful adjustment of experimental parameters ensures a spectral, temporal and spatial mode overlap of photons close to unity \cite{Hofmann12}. This allows us to achieve a high two-photon interference quality limited mostly by two-photon emission effects of a single atom. The joint measurement performed on these photons distinguishes two out of the four Bell states and ideally projects the atoms into either of the two states $|\psi^\pm\rangle = (\ket{\uparrow}_x\ket{\downarrow}_x\pm\ket{\downarrow}_x\ket{\uparrow}_x)/\sqrt{2}$ according to the outcome. Depending on the loading rate of the traps, 1 to 2 successful Bell state measurements are obtained per minute. At each success, a signal is sent to Alice and Bob and triggers setting choices. The analysis basis is selected by the output of a fast quantum random number generator, which is based on counting photons emitted by an LED with a photo-multiplier tube \cite{Fuerst2010,Rosenfeld17}. The measurement outcome is obtained by a spin-state dependent ionization with a fidelity of $97\%$ on a timescale  $\leq 1.1\mu$s. Given that Alice's and Bob's locations are separated by  398 m, this warrants space-like separation of the measurements. Although (strict) space-like separation is not a necessary condition for self-testing, it is a strong guarantee that Alice and Bob's measurement devices are indeed separated from each other and that information about the setting of one party is not available to the other one upon measurement, i.e.~for assumptions 2 and 4 above. \\

\section{CHSH Bell inequality}
Let us label the measurement settings $x=0,1$ and $y=0,1$ for Alice and Bob respectively, with  outcomes $a=0,1$ and $b=0,1$ for each spin measurement. For each pair of settings, we define the correlator $E_{xy}=\sum_{ab}(-1)^{a+b}P(a,b|x,y)$ where $P(a,b|x,y)$ is the conditional probability of observing outcome $a$ and $b$ when choosing the settings $x$ and $y$. This allows us to define the Bell-CHSH value, given by 
\begin{equation}\label{eq:CHSH}
S = E_{00} + E_{01} + E_{10} - E_{11}.
\end{equation}
The latter is upper bounded by $2$ for any local causal theory~\cite{CHSH69}. A significant violation of this bound can thus rule out this possibility, as conclusively demonstrated earlier~\cite{Rosenfeld17}, see also~\cite{Hensen15, Shalm15, Giustina15}. Note that here the values of 0 and 1 for the settings and outcomes were assigned arbitrarily, therefore, any of the 8 relabellings of Eq.~\eqref{eq:CHSH} equally qualifies as a valid definition of the quantity $S$~\cite{Rosset14}. With fixed measurement settings, such equivalent rewritings of the CHSH expression may be necessary to obtain a violation of the local bound with different Bell states.\\

\section{Self-testing a Bell state}
Given assumption 1 above, we can associate to each measurement of Alice and Bob quantum observables $A_x$ and $B_y$ acting on two Hilbert spaces $\cH_A$ and $\cH_B$ of unknown dimension. Also, we can define the quantum state shared by the two parties as $\rho_{AB} \in L(\cH_A\otimes\cH_B).$ We emphasize that the internal functioning of the source and measurement boxes do not need to be known. We simply attribute a quantum state and measurement operators to the actual implementation. Our aim is to identify the actual state $\rho_{AB}$ from the observed statistics only. More precisely, we wish to estimate its fidelity with respect to a maximally entangled state of two qubits, that is
\begin{equation}\label{eq:BellStateFidelity}
F(\rho_{AB}) = \underset{\Lambda_A,\Lambda_B}{\max}\tr{(\Lambda_A \otimes \Lambda_B)[\rho_{AB}], \ketbra{\psi^-}{\psi^-}},
\end{equation}
where the maximization is over all local trace-preserving maps $\Lambda_{A/B}:\cH_{A/B} \to \mathbb{C}^2.$ The role of these maps $\Lambda_{A/B}$ is to identify the subsystems inside the unknown Hilbert spaces $\cH_{A/B}$ in which $\rho_{AB}$ can be compared to the desired state. Given an observed Bell-CHSH value $S$, the Bell state self-testing fidelity is defined as the minimum fidelity of the unknown quantum state $\rho_{AB}$ which is compatible with the violation, i.e.
\begin{eqnarray}
\cF = \underset{\rho_{AB}, {A_x}, {B_y}}{\min}&& F(\rho_{AB})\\ 
\nonumber \text{s.t.\ \ \ }&& E_{00} + E_{01} + E_{10} - E_{11} = S,
\end{eqnarray}
where the correlators are now given by $E_{xy}=\tr{\rho_{AB} A_xB_y}$.
This quantity captures the relation between $\rho_{AB}$ and the singlet state $\ket{\psi^-}$, one representative Bell state, that can be inferred from observed statistics: if the quantum state is separable, then $\cF\leq\frac12$; on the other hand, if $\cF=1$, then we have the guarantee that local maps exist which identify perfectly a Bell state within the state $\rho_{AB}$, because this is the case for all admissible quantum realizations.\\

It has been shown that the self-testing fidelity $\cF$ can be directly related to the sole knowledge of the Bell-CHSH value $S$~\cite{Bardyn09}. The tightest known relation is given by~\cite{Kaniewski16}
\begin{equation}\label{eq:fidCHSH}
\cF \geq f(S)=\frac{\max\left(40,\,12+(4+5\sqrt{2})(5S-8)\right)}{80}.
\end{equation}

\section{Statistical analysis}
The previous formula holds in the limit where the CHSH value $S$ is known exactly. In order to analyze a real experiment with finite statistics, we consider that each run $i=1,\ldots,n$ is characterized by an (unknown) CHSH value $S_i$ and fidelity $\cF_i$. This fidelity could be different at each round, and depend on past events. We are then interested in making a claim on the average fidelity
\begin{equation}\label{eq:averageFidelity}
\overline{\cF} = \frac{1}{n}\sum_{i=1}^n \cF_i.
\end{equation}
Other works have considered a different figure of merit for the certification of states in a non-I.I.D. setting~\cite{Supic19}. In Appendix~\ref{app:Fidelities}, we show that the two approaches are equivalent to each other and that the numerical value provided by the average fidelity~\eqref{eq:averageFidelity} has the advantage of being a direct quantifier of the source quality.

Assuming that the measurement settings are chosen independently by both parties and with a maximum bias $\tau$ with respect to a uniform distribution, i.e.~$1/2-\tau \leq P(x),P(y) \leq 1/2+\tau$, we show in the Appendix~\ref{App:statistics} that
\begin{equation}\label{eq:S-hat}
\hat S = 8\left(I^{-1}_{\alpha}\!\left(n \overline t-1, n(1-\overline t)+2\right)-\tau-\tau^2\right)-4
\end{equation}
is a lower bound on the average CHSH violation $\overline S=\frac{1}{n}\sum_i S_i$ with confidence level $1-\alpha$. This allows us to conclude that $[\hat{\cF}, 1]$ with
\begin{equation}\label{eq:cf-hat}
\hat \cF = f(\hat S)
\end{equation}
is a one-sided confidence interval for $\overline{\cF}$ with confidence level $1-\alpha$. Here, $\overline t = (4+\overline S^u)/8$ with $\overline S^u$ is the average CHSH value observed over the $n$ rounds assuming a uniform sampling of the settings, i.e.~following Eq.~\eqref{eq:Su} from the Appendix~\ref{App:statistics}, and $I^{-1}$ is the inverse regularized incomplete Beta function, i.e.~$I_\my(\ma,\mb)=\mx$ for $\my=I^{-1}_\mx(\ma,\mb)$. We emphasize that this bound on the average Bell state fidelity does not rely on the I.I.D. assumption.\\

\section{Preselection}
In contrast to results presented in Ref.~\cite{Rosenfeld17}, where all registered events were taken into account, here we use a pre-selected set of events to compute the Bell-CHSH violation and the subsequent self-testing fidelities of heralded atomic states. This selection is based on a physical model which takes into account detrimental two-photon emission effects of a single atom and allows to define pre-selection criteria, here a time-window for acceptance of photons in the BSM, to improve the fidelity of the entangled atom-atom state. Details can be found in the Appendix~\ref{app:Preselection}.
Importantly, these considerations are not based on the results observed during the experiment. They are based on an ab-initio model of the underlying excitation and emission processes. Therefore, these considerations allow for the determination of a significance level and desired amount of data prior to the data acquisition stage, in agreement with the requirements of confidence intervals construction. This selection can then be seen as a pre-selection of the data, or equivalently, as a state preparation. In particular, it does not open the detection loophole or introduce expectation bias~\cite{Jenga06}.\\

\begin{figure}
\centering
\includegraphics[width=8.6cm]{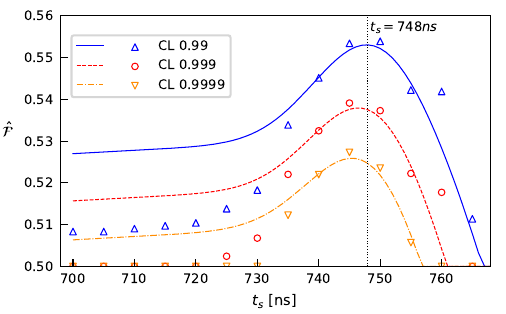}
\caption{Expected self-testing fidelity $\hat{\cF}$ resulting from the pre-selection model (lines) for different confidence levels (CL) as a function of the starting time of the acceptance time window for heralding events $t_{s}$ (at a fixed $t_{e}$). The optimal start times for for different confidence levels according to the model are shown in Tab.~\ref{tab:Fidelities2}. In comparison the self-testing fidelity for the measured data ($|\psi^-\rangle$, symbols) evaluated with the acceptance time window starting with $t_s$. Note that the model for the pre-selection is ab-initio and not a fit for the data. For details of the model and a comparison to the measured data see Appendix.}
\label{fig:self-testing-fidelity}
\end{figure}

\section{Results}
For the evaluation we use the data of events heralding the $\ensuremath{|\psi^{-}\rangle}$ state from the loophole-free Bell test~\cite{Rosenfeld17} taken between 05.02.2016 and 24.06.2016 ($25211$ events). Fig.~\ref{fig:self-testing-fidelity} shows the resulting lower bound of the average fidelity $\hat{\mathcal{F}}$ for the ab-initio model and the data set using the same pre-selection as a function of the acceptance window starting time $t_{s}$ for different confidence levels. The model allows to determine the acceptance time window start time $t_s$ and end time $t_e$ for an optimal expected lower bound for the fidelity $\hat{\mathcal{F}}$ for each confidence level shown in Fig.~\ref{fig:self-testing-fidelity}. The results for the pre-selected data are shown in Tab.~\ref{tab:Fidelities2}. For calculation of the lower bound of the fidelity $\hat{\mathcal{F}}$ we consider bias of the RNGs bounded by $\tau=6.3\times10^{-4}$ (arsing from the ``paranoid'' model for the predictability~\cite{Rosenfeld17}).

The lower bound of fidelity exceeding the value of $0.5$ (at a confidence level of up to  $1-1.0\times10^{-7}$ for $t_s=746$ ns) represents the first device-independent certification of a distributed entangled state. Moreover, an evaluation of the full data set without any pre-selection yields a Bell state fidelity of $0.5061$ at $99\%$ confidence and a Bell state fidelity larger than $0.5$ can be certified even at a significance level as high as $99.7\%$.

As a comparison, note that a confidence interval could also be constructed from Hoeffding's inequality~\cite{Hoeffding63}, yielding
\begin{equation}\label{eq:hoeff}
\hat S_H = 8\left(\overline t - \sqrt{\frac{\log(1/\alpha)}{2n}}-\tau-\tau^2\right)-4.
\end{equation}
The conclusion obtained with this inequality would however be significantly weaker\footnote{Perhaps counter-intuitively, the improved tail bound given in Eq.~(2.1) of~\cite{Hoeffding63} is not of much help here since it relies on the knowledge of the average winning probability, but confidence intervals must hold for all (unknown a priori) winning probabilities.}. For instance, the claim that the fidelity $\overline \cF$ is nontrivial (i.e.~$\geq 1/2$) for the whole data set would not be statistically significant. Indeed, the corresponding statistical level is $\alpha=6.5\%$, i.e.~about 20 times larger than guaranteed by our bound. The average fidelity guaranteed with pre-selection would also be significantly lower. Using $t_s=748$~ns, for instance, the lower bound on the fidelity at $99\%$ confidence level is $0.5291$, i.e.~about twice closer to the trivial value of $1/2$ compared to $0.5550$.

Additionally, we applied our method to the data of the table-top Bell test performed between two ions separated by 1 meter reported in~\cite{Pironio10}. Our statistical analysis yields an average Bell violation of $2.2715$ or higher at $99\%$ confidence level, clearly above the threshold $2.11$. This sets a lower bound on the Bell state fidelity of $61.46\%$ at this confidence level, hence guaranteeing for the first time that the states distributed in this experiment had a significant average Bell state fidelity.

Finally, we applied our method to data from~\cite{Hensen16} obtained at a distance of $1.3$~km. Due to the limited number of events ($545$), the method can only guarantee a fidelity larger than $50\%$ with a confidence of $\sim94.2\%$. Still, this demonstrates that the method can be used in different systems without the need of knowing their specific details.\\

\begin{table}
\begin{centering}
\begin{tabular}{|c|c|c|c|c|}
\hline 
CL & $\hat{\mathcal{F}}$ &  $t_s$ & $n$ & $\overline{S}^u$ \tabularnewline
\hline 
\hline 
$99\%$ & $0.5550$ & $748$ ns & $13141$ & $2.2589$ \tabularnewline
\hline 
$99.9\%$ & $0.5407$ & $746$ ns & $14807$ & $2.2554$ \tabularnewline
\hline 
$99.99\%$ & $0.5272$ & $745$ ns & $15671$ & $2.2505$ \tabularnewline
\hline 
\end{tabular}
\par\end{centering}
\caption{Fidelity $\hat{\mathcal{F}}$ at different confidence levels. For each the data is pre-selected with an optimal start time for the acceptance time window [$t_s$, $t_e=895\text{\,ns}$] resulting in the corresponding $n$ events and an average CHSH value $\overline{S}^u$.}
\label{tab:Fidelities2}
\end{table}

\section{Discussion}
We have derived a bound on the average fidelity of a measured state with respect to a Bell state from the sole knowledge of the observed Bell-CHSH value which is free of the I.I.D. assumption. This bound was achieved by constructing a non-I.I.D. confidence interval for the sum of $n$ independent binary random variables. We used it to quantify device-independently the quality of a bipartite state distributed over $398$~m in a real-world elementary quantum network link. These results guarantee that this link is suitable for an integration in a quantum network, either directly or as a part of a quantum repeater.\\

\section*{Acknowledgments} J.-D.B., P.S. and N.S. acknowledge funding by the Swiss National Science Foundation (SNSF), through the Grant PP00P2-179109 and 200021-175527, by the Army Research Laboratory Center for Distributed Quantum Information via the project SciNet and from the European Union’s Horizon 2020 research and innovation programme under grant agreement No 820445 and project name Quantum Internet Alliance. K.R. and W.R. acknowledge funding by the German Federal Ministry of Education and Research via the project Q.com-Q.

\appendix

\section{Preselection of heralding events}\label{app:Preselection}

To allow filtering as a preselection we have developed an ab-initio physical model independently of our measurement results to find an optimum filtering based on the model only. This model describes the photon emission process of a single atom excited by a short laser pulse and takes into account all important processes within its multilevel structure. Thereby we are able to calculate the expected fidelity for the entangled state of two atoms heralded by a two-photon coincidence at a certain time. The full description of the model goes far beyond the focus of the present work and can be found in~\cite{thesisJulian14,thesisKai20}, here we only present a brief sketch of it.

\begin{figure}
\begin{centering}
\includegraphics{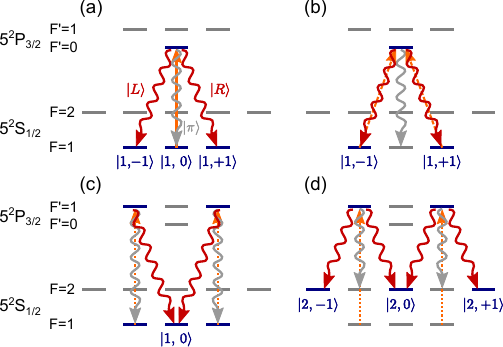}
\par\end{centering}
\caption{Structure of the relevant levels in $\mathrm{^{87}Rb}$, excitation an decay processes. (a) Generation of atom-photon entanglement in spontaneous decay of the excited $5^{2}P_{3/2},F'=0,m_{F}=0$ level. The excitation laser is shown in orange. Photons polarized linearly along the quantization axis ($\pi$-decays, gray) are not detected in our system. After the first decay, a second excitation is possible due to polarization misalignment (b), or off-resonant excitation (c). If the $5^{2}P_{3/2},F'=1$
is excited, also decay to $5^{2}S_{1/2},F=2$ level is possible (d).
\label{fig:Levels} }
\end{figure}

For generation of a photon whose polarization is entangled with the atomic spin state, the atom is excited by a laser pulse resonant to the transition $5^{2}S_{1/2},F=1\rightarrow5^{2}P_{3/2},F'=0$. The temporal shape of the pulse is approximately Gaussian with a FWHM of $22$ ns, see Fig.~\ref{fig:PhotonHisto}. After the successful emission of a photon, ideally, the atom should not interact with the excitation laser due to selection rules, see Fig.~\ref{fig:Levels}(a). In practice, however, the atomic state remains weakly sensitive to the excitation laser due to two reasons. First, there unavoidably are small polarization misalignments of the excitation laser, i.e.~its polarization is not perfectly aligned along the quantization axis (imperfect $\pi$-polarization), allowing for a reexcitation of the
$5^{2}P_{3/2},F'=0,m_{F}=0$ level, see Fig.~\ref{fig:Levels}(b).
Second, off-resonant scattering via the $5^{2}P_{3/2},F'=1$ level is possible (Fig.~\ref{fig:Levels}(c),(d)). Moreover, before the emission of a photon into the desired mode takes place, there is a finite probability that the atom emitted a first photon in a $\pi$-transition, which is not collected by the optics. These multiple photon emissions are detrimental for the quality of the atomic state announced by detection of the photons in the Bell state measurement in two different ways. On the one hand, the state of the atom can be changed by scattering additional photons. On the other hand, the interference quality of photons is reduced since the temporal shape and coherence of the photonic wavepackets is affected.

\begin{figure}
\begin{centering}
\includegraphics{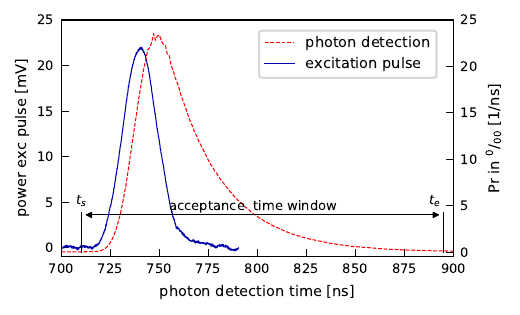}
\par\end{centering}
\caption{Time histogram of single photon detection (red) for excitation of a single atom by a short resonant laser pulse (blue). $t_{s}$ and $t_{e}$ define the acceptance time window for coincidences. \label{fig:PhotonHisto}}

Importantly, the unwanted multiphoton processes happen predominantly during the excitation. Thus, to filter them, we only accept the detection events for which the first detector click is obtained after a time $t_{s}$ when the excitation laser pulse is essentially off, see Fig.~\ref{fig:PhotonHisto}. Additionally, to reduce the dark counts contribution to the heralding event we define a maximal time $t_{e}$ for the detection of the second photon. A later start time $t_s$ increases the entanglement swapping fidelity and by this the expected S-value for the CHSH inequality but  at the expense of the obtained events number, see Fig.~\ref{fig:S-window}. Since the measured S-values will depend not only on the entanglement swapping fidelity but also on other properties,  e.g., the atomic state measurement fidelity and the coherence time of the entangled states, we use the experimental parameters as specified in~\cite{Rosenfeld17} to predict the experiment's S-value.

The optimal selection of the time window of  $\left[t_{s}=748\text{\,ns},t_{e}=895\text{\,ns}\right]$, considering Eq.~\eqref{eq:cf-hat} for a $99\%$ confidence interval, reduces the number of heralding events by approximately a factor of $2$ but the atoms are expected to be in an entangled state of a higher quality.

\end{figure}
\begin{figure}
\begin{centering}
\includegraphics{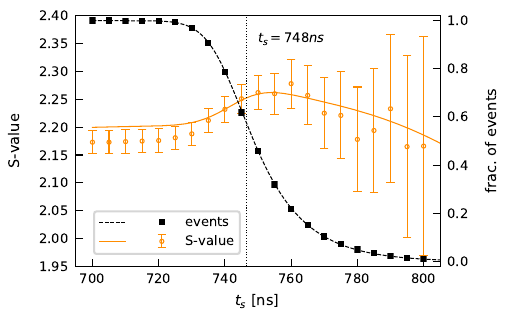}
\par\end{centering}
\caption{S-value for CHSH inequality (orange line) resulting from the ab-initio model as a function of the start time $t_{s}$ of the acceptance time window. The orange circles show the values obtained from experimental data by applying the same filtering criteria. The black dashed line and squares respectively show the predicted and measured fraction of remaining events after filtering. The optimal calculated start time for the acceptance time window is $t_{s}=748\text{\,ns}$ according to Eq.~\eqref{eq:cf-hat} for $99\%$ confidence. The contribution of $t_{e}$ to the fidelity was found to be negligible, it was fixed at $t_{e}=890\text{\,ns}$.
\label{fig:S-window}}
\end{figure}

\section{Finite statistics analysis}\label{App:statistics}
In this section, we detail the construction of the confidence interval on the average singlet fidelity reported in the main text.

\subsection{Model}
In the experimental situation described in the main text, the settings used after the $i^\text{th}$ heralding event can be described by two random variables $X_i$ and $Y_i$. These variables follow a global probability distribution
\begin{equation}\label{eq:settings}
P(\vec X = \vec x, \vec Y = \vec y),
\end{equation}
where $\vec X = (X_1, X_2, \ldots)$, $\vec Y = (Y_1, Y_2, \ldots)$ and $\vec x, \vec y \in\{0,1\}^n$ for a binary choice of settings.

Similarly, two random variables $A_i$ and $B_i$ can be used to describe the outcomes observed upon measuring the state in the $i^\text{th}$ round. By assumptions 2-4 of the main text, the settings and outcomes follow a joint probability distribution of the form
\begin{equation}\label{eq:model}
\begin{split}
P(\vec A &= \vec a, \vec B = \vec b, \vec X = \vec x, \vec Y = \vec y) =\\
& P(\vec X=\vec x, \vec Y=\vec y) \prod_{i=1}^n P_i(a_i,b_i|x_i,y_i,\text{past}_i).\\
\end{split}
\end{equation}
Here, $\vec a, \vec b \in \{0,1\}^n$ are the possible outcome strings in the binary case,
\begin{equation}\label{eq:distri}
\begin{split}
&P_i(a,b|x,y,\text{past}_i)=\\
&\ \ P(A_i=a,B_i=b|X_i=x,Y_i=y,\text{Past}_i=\text{past}_i)
\end{split}
\end{equation}
describes the behavior sampled in the $i^\text{th}$ round and $\text{past}_{i}\ni\{a_j,b_j,x_j,y_j\}_{j<i}$ stands for any information available from the past of round $i$ such as the previous settings and outcomes. The settings distribution can be decomposed into the measurement rounds in a similar fashion as
\begin{equation}
P(\vec X = \vec x, \vec Y = \vec y) = \prod_{i=1}^n P_i(x_i,y_i|\text{past}_i) 
\end{equation}
with
\begin{equation}
P_i(x,y|\text{past}_i)=P(X_i=x,Y_i=y|\text{Past}_i=\text{past}_i).
\end{equation}

We associate to each measurement round $i$ the CHSH value
\begin{equation}\label{eq:CHSH def}
S_{i|\text{past}_i} = \sum_{a,b,x,y} (-1)^{a+b+x y} P_i(a,b|x,y,\text{past}_i).
\end{equation}
This quantity $S_{i|\text{past}_i}$ is the expectation value for the S parameter given in Eq.~\eqref{eq:CHSH} for the given round $i$. We also define the singlet fidelity $\cF_{i|\text{past}_i}$, which is bounded according to Eq.~\eqref{eq:fidCHSH} as
\begin{equation}
\cF_{i|\text{past}_i} \geq f(S_{i|\text{past}_i}).
\end{equation}
Note that these statistical parameters may be different for all rounds $i$ and may depend on past events.

\subsection{Estimation}
Before focusing on the fidelity, our figure of merit, let us estimate the Bell contribution corresponding to a given round $i$. For this, we introduce the statistic
\begin{equation}\label{eq:Ti}
T_{i|\text{past}_i} = \frac{1}{4}\cdot\frac{\chi(A_i\oplus B_i=X_i Y_i)}{P_i(X_i,Y_i|\text{past}_i)},
\end{equation}
where $\chi$ is the indicator function, i.e.~$\chi(\text{condition})=1$ if the condition is true and $\chi(\text{condition})=0$ for a false condition, $\oplus$ is the addition modulo 2, and the term in the denominator
\begin{equation}
\begin{split}
P_i(X_i,&Y_i|\text{past}_i)\\
=&\begin{cases}
P_i(0,0|\text{past}_i), & \text{if } X_i=0,Y_i=0\\
P_i(0,1|\text{past}_i), & \text{if } X_i=0,Y_i=1\\
P_i(1,0|\text{past}_i), & \text{if } X_i=1,Y_i=0\\
P_i(1,1|\text{past}_i), & \text{if } X_i=1,Y_i=1\\
\end{cases}
\end{split}
\end{equation}
refers to the probability with which the observed settings have been sampled, a notation customary in statistics (c.f.~the usual definition of Fisher information for instance). The expectation value of this estimator is directly related to the CHSH violation on round $i$ given the past:
\begin{equation}\label{eq:TiS}
\begin{split}
&\mathbb{E}(T_{i|\text{past}_i})\\
&\ = \sum_{\vec a, \vec b, \vec x, \vec y} T_{i|\text{past}_i} P(\vec A=\vec a, \vec B=\vec b, \vec X=\vec x,\vec Y=\vec y)\\
&\ = \frac{1}{4}\sum_{a,b,x,y}\frac{\chi(a\oplus b=x y)}{P(X_i=x,Y_i=y|\text{Past}_i=\text{past}_i)}\\
&\ \ \times P(A_i=a,B_i=b,X_i=x,Y_i=y|\text{Past}_i=\text{past}_i)\\
&\ = \frac{1}{4}\sum_{a,b,x,y}\chi(a\oplus b=x y)\ P_i(a,b|x,y,\text{past}_i)\\
&\ = \frac{4+S_{i|\text{past}_i}}{8}.
\end{split}
\end{equation}
This expression thus provides a good estimation of the Bell violation contribution of round $i$. Note that the relation~\eqref{eq:TiS} is valid for all distribution of the settings which is independent from $A$ and $B$'s behavior according to Eq.~\eqref{eq:model}.

In the case where the settings are chosen uniformly, i.e.~$P_i(x_i,y_i|\text{past}_i)=\frac{1}{4}$, the random variable $T_{i|\text{past}_i}$ is a Bernoulli variable whose only possible values are $0$ and $1$. It can then be interpreted as a binary game which is either won (if $T_{i|\text{past}_i}=1$) or lost (if $T_{i|\text{past}_i}=0$). The CHSH contribution of round $i$ can then be re-interpreted in terms of the winning probability $q_{i|\text{past}_i}=P(T_{i|\text{past}_i}=1)$ of this game, such that
\begin{equation}\label{eq:Sq}
8q_{i|\text{past}_i}=4+S_{i|\text{past}_i}.
\end{equation}

\subsection{Settings choice bias}
In practice, it may be difficult to guarantee that the choice of settings is exactly uniform. One can then resort to a partial characterization of the settings' distribution. For instance,  consider the case where the settings of Alice and Bob are chosen independently as
\begin{equation}
P_i(x_i,y_i|\text{past}_i)= P(x_i|\tau^x_i)P(y_i|\tau^y_i)
\end{equation}
with
\begin{align}
P(x_i|\tau^x_i) &= \frac{1}{2}+(-1)^{x_i}\tau^x_i \\
P(y_i|\tau^y_i) &= \frac{1}{2}+(-1)^{y_i}\tau^y_i,
\end{align}
and we only have the guarantee that the local biases are bounded $|\tau_i^x|, |\tau_i^y|\leq \tau$ by some maximal value $\tau\leq \frac{1}{2}$. In this case the statistic~\eqref{eq:Ti} as well as the CHSH value Eq.~\eqref{eq:CHSH def} cannot be evaluated directly. We can nevertheless bound its behavior.

For this, let us then consider the statistic that would correspond to a uniform choice of settings
\begin{equation}\label{eq:Tiu}
\begin{split}
T^u_{i|\text{past}_i} = \chi(A_i \oplus B_i=X_i Y_i).
\end{split}
\end{equation}
As mentioned before, this statistic is a Bernoulli random variable taking value either 0 or 1. Its winning probability is $q_{i|\text{past}_i}^u=\mathbb{E}(T^u_{i|\text{past}_i})$, and can be evaluated without the knowledge of the settings distribution. 
It's expectation value is given by
\begin{align}
&\mathbb{E}(T^u_{i|\text{past}_i})\nonumber\\
& =\sum_{a,b,x,y}\!\!\!\!\!\chi(a \oplus b=x y) P_i(a,b|x,y,\text{past}_i)\\
&\ \ \ \ \ \ \ \ \ \times P(x|\tau^x) P(y|\tau^y) \nonumber \\
&= \sum_{a,b,x,y}\!\!\chi(a \oplus b=x y) P_i(a,b|x,y,\text{past}_i) \nonumber\\
&\ \ \ \ \ \times \left(\frac{1}{4}+\frac{(-1)^{x}}{2}\tau^x + \frac{(-1)^{y}}{2}\tau^y + (-1)^{x+y}\tau^x\tau^y\right) \nonumber\\
&= \underbrace{\frac{1}{4} \sum_{a,b,x,y}\!\!\chi(a \oplus b=x y) P_i(a,b|x,y,\text{past}_i) }_{=\mathbb{E}(T_{i|\text{past}_i}) }\\
&\ \ \ +\sum_{a,b,x,y}\!\!\chi(a \oplus b=x y) P(a,b|x,y,\text{past}_i)\nonumber\\
&\ \ \ \times \left(\frac{(-1)^{x}}{2}\tau^x + \frac{(-1)^{y}}{2}\tau^y + (-1)^{x+y}\tau^x\tau^y\right) \nonumber
\end{align}
Defining $f_{xy}= \sum_{a,b} \chi(a \oplus b = x y) P(a,b|x,y,\text{past}_i)\in[0,1]$ we write
\begin{align}\label{eq: means diff}
&\mathbb{E}(T^u_{i|\text{past}_i}) -\mathbb{E}(T_{i|\text{past}_i}) \\
&= \sum_{x,y} f_{xy} \left(\frac{(-1)^{x}}{2}\tau^x + \frac{(-1)^{y}}{2}\tau^y + (-1)^{x+y}\tau^x\tau^y\right).\nonumber
\end{align}
Let us now consider this sum. Without loss of generality we set $0\leq \tau_y\leq\tau_x\leq \tau$, all the other cases directly follow by a permutation of the outcomes or the exchange of $x$ and $y$. One has 
\begin{align}
& 2 \sum_{x,y} f_{xy}
\left(\frac{(-1)^{x}}{2}\tau^x + \frac{(-1)^{y}}{2}\tau^y + (-1)^{x+y}\tau^x\tau^y\right) \nonumber\\
&= f_{00} \left(\tau^x + \tau^y + 2 \tau^x\tau^y\right) + f_{10} \left(-\tau^x+ \tau^y - 2 \tau^x\tau^y\right) \nonumber\\
&\ \ \ + f_{01} \left(\tau^x - \tau^y- 2 \tau^x\tau^y\right) + f_{11} \left(-\tau^x - \tau^y + 2 \tau^x\tau^y\right) \nonumber\\
&\leq \tau^x + \tau^y + 2 \tau^x\tau^y + f_{01}\left(\tau^x - \tau^y- 2 \tau^x\tau^y\right),
\end{align}
where we used $f_{xy}\in[0,1]$, $\tau^x + \tau^y + 2 \tau^x\tau^y \geq 0$,  $-\tau^x + \tau^y- 2 \tau^x\tau^y \leq 0$ and $-\tau^x - \tau^y + 2 \tau^x\tau^y  \leq 0$. 
For the last term  one finds
\be
\begin{split}
f_{01}&\left(\tau^x - \tau^y- 2 \tau^x\tau^y\right) \\
&\ \ \ \leq \begin{cases}
\tau^x - \tau^y- 2 \tau^x\tau^y & \tau_y < \frac{\tau_x}{1+2\tau_x}\\
0 & \tau_y\geq \frac{\tau_x}{1+2\tau_x},
\end{cases} 
\end{split}
\ee
leading to 
\begin{align}
& \sum_{x,y}f_{xy}\left(\frac{(-1)^{x}}{2}\tau^x + \frac{(-1)^{y}}{2}\tau^y + (-1)^{x+y}\tau^x\tau^y\right)\nonumber\\
&\leq \begin{cases}
 \tau^x  & \tau_y <  \frac{\tau_x}{1+2\tau_x}\\
 \frac{1}{2}(\tau^x + \tau^y) + \tau^x\tau^y & \tau_y\geq  \frac{\tau_x}{1+2\tau_x}
\end{cases}\\
&\leq \tau +\tau^2,
\end{align}
which holds for all allowed values of $\tau^x$ and $\tau^y$. Plugging this inequality in Eq.~\eqref{eq: means diff} then gives
\begin{equation}
\mathbb{E}(T^u_{i|\text{past}_i}) -\mathbb{E}(T_{i|\text{past}_i}) \leq \tau+\tau^2.
\end{equation}
Therefore, we obtain
\be\label{eq:quq}
q_{i|\text{past}_i} \geq q_{i|\text{past}_i}^u  -\tau -\tau^2,
\ee
meaning that a lower bound on the winning probability $q_{i|\text{past}_i}^u$ of the uniform statistic $T^u_{i|\text{past}_i}$ gives rise to a lower bound on $q_{i|\text{past}_i}$. In order to estimate $q_{i|\text{past}_i}$ with a distribution of settings which is not fully known, we can thus safely estimate the CHSH value with the statistic~\eqref{eq:Tiu}, effectively assuming that the settings are chosen uniformly, and then correct the winning probability $q_{i|\text{past}_i}^u$ according to the value of $\tau$, as expressed in~\eqref{eq:quq}. This provides a lower bound on the actual winning probability $q_{i|\text{past}_i}$ of~\eqref{eq:Ti}.

To simplify the notation, we now drop the explicit conditioning on the past, and thus simply write e.g.~$S_i$, $q_i$, $T_i$ and $\cF_i$ for the quantities introduced above.

\subsection{Bounding the fidelity}
We construct a statistical parameter for the whole experiment corresponding to the average Bell state fidelity:
\begin{equation}
\overline{\cF} = \frac{1}{n}\sum_i \cF_i.
\end{equation}
Thanks to the convex relation between the CHSH violation and the singlet fidelity Eq.~\eqref{eq:fidCHSH}, this quantity can be bounded from the average CHSH violation $\overline S$ as
\begin{eqnarray}
\overline{\cF} &=& \frac{1}{n}\sum_i \cF_i\\
&\geq& \frac{1}{n}\sum_i f(S_i)\\
&\geq& f\left(\frac{1}{n}\sum_i S_i\right) =  f(\overline S),
\end{eqnarray}
or equivalently, using relation~\eqref{eq:Sq}, from the average winning probability $\overline{q} = \frac{1}{n}\sum_i q_i$ as
\begin{equation}
\overline{\cF} \geq f(8\overline{q}-4).
\end{equation}
In particular, a left-sided confidence interval for $\overline{q}$ gives rise to a left-sided confidence interval for the singlet fidelity. By relation~\eqref{eq:quq}, a left-sided confidence interval for $\overline{q}^u=\sum_i q_i^u$ gives rise to a left-sided confidence interval for $\overline q$, and thus also for the singlet fidelity:
\begin{equation}
\overline{\cF} \geq f\left(8(\overline{q}^u -\tau-\tau^2)-4\right).
\end{equation}
Let us thus focus now on the average winning probability $\overline{q}^u$.

\subsection{A confidence interval for the average winning probability}
The random variables $T_i^u$ in Eq.~\eqref{eq:Tiu} being estimators for the parameters $q_i^u$, we use their average
\begin{equation}\label{eq:Tu}
\overline{T}^u = \frac{1}{n} \sum_{i=1}^n T_i^u
\end{equation}
to estimate $\overline q^u$. This gives rise to the following effective CHSH value
\begin{equation}\label{eq:Su}
\overline{S}^u = 8\overline{T}^u-4
\end{equation}
which can be evaluated in practice directly from the observed data, without assumption on the distribution of measurement settings. Note that each random variable $T_i^u$ is a Bernoulli variable with parameter $q_i^u$. Therefore, $\overline{T}^u$ is a so-called (renormalized) Poisson binomial random variable. The distribution probability of such a random variable in terms of the average parameter $\overline q^u$ has been characterized by Hoeffding in 1956~\cite{Hoeffding56}. We recall this result here.

\begin{theorem}[Hoeffding, 1956]\label{thm:Hoeffding}
Let $\overline T=\frac{1}{n}\sum_{i=1}^n T_i$ be the average of $n$ independent Bernoulli variables $T_i$ with parameters $q_i$. If $c$ and $d$ are two integers such that
\begin{equation}
0 \leq c \leq n \overline q \leq d \leq n
\end{equation}
for $\overline q = \frac{1}{n}\sum_{i=1}^n q_i$, then
\begin{equation}
P(c\leq n \overline T \leq d) \geq \sum_{k=c}^d \binom{n}{k} {\overline q}^k (1-\overline q)^{n-k}.
\end{equation}
\end{theorem}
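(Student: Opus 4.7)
The plan is to establish that, among all Poisson-binomial distributions on $n$ trials with a fixed mean $n \overline q$, the ordinary binomial $B(n, \overline q)$ is the one that assigns the least mass to any interval $[c,d]$ containing $n \overline q$; the stated bound then follows immediately, since the Poisson-binomial probability on $[c,d]$ can only be larger than the binomial one.

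To prove this extremal property, I would view $G(q_1, \ldots, q_n) := P\!\left(c \leq \sum_i T_i \leq d\right)$ as a continuous function on the compact slice $\{q \in [0,1]^n : \sum_i q_i = n \overline q\}$, and show that it is minimised at the symmetric point $q_i = \overline q$ for all $i$. The main tool is a pairwise averaging lemma: replacing any two entries $(q_i, q_j)$ by $((q_i + q_j)/2, (q_i + q_j)/2)$ preserves the total $\sum_l q_l$ while weakly decreasing $G$. Once this is established, iterating the operation (or invoking a compactness/continuity argument on the polytope) drives the minimum of $G$ to the fully symmetric configuration, which is precisely the binomial case and yields the claimed inequality.

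For the averaging lemma, I would condition on $S := \sum_{k \neq i,j} T_k$ and exploit that, with $s := q_i + q_j$ fixed, the distribution of $T_i + T_j$ depends only on the product $p := q_i q_j$:
\begin{align*}
P(T_i + T_j = 0) &= 1 - s + p,\\
P(T_i + T_j = 1) &= s - 2p,\\
P(T_i + T_j = 2) &= p.
\end{align*}
Writing $P_m := P(S = m)$ and summing over $k \in [c,d]$, a short calculation yields
\begin{equation*}
G = G_0(s) + p\,\bigl[(P_d - P_{d-1}) - (P_{c-1} - P_{c-2})\bigr],
\end{equation*}
so averaging, which raises $p$ to its AM-GM maximum $s^2/4$, weakly decreases $G$ provided the bracketed second-difference term is non-positive.

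The main obstacle is exactly this sign inequality $(P_d - P_{d-1}) - (P_{c-1} - P_{c-2}) \leq 0$. I expect it to follow from the constraint $c \leq n \overline q \leq d$ together with the log-concavity, and hence unimodality, of the Poisson-binomial distribution of $S$: the mode of $S$ lies near its mean $n\overline q - s$, which places $c - 1$ at or below the mode and $d$ at or above it, so $P_{c-1} - P_{c-2} \geq 0$ while $P_d - P_{d-1} \leq 0$. Boundary subcases, where the mode coincides with $c - 1$ or $d$, will require either a small perturbation argument or a refinement of the averaging step restricted to pairs that keep the mode strictly interior. Once this sign inequality is settled, the smoothing-to-the-centroid argument concludes the proof routinely.
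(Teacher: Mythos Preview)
The paper does not prove this theorem; it merely recalls Hoeffding's 1956 result and cites it. So there is no ``paper's own proof'' to compare against, and your proposal is an attempt to reconstruct Hoeffding's original argument.

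Unfortunately, your key averaging lemma --- that replacing $(q_i,q_j)$ by their common average never increases $G$ --- is false. Take $n=4$, $c=0$, $d=2$, and $(q_1,q_2,q_3,q_4)=(0,\,0.2,\,0.9,\,0.9)$, so that $n\overline q = 2$ and the hypothesis $c\le n\overline q\le d$ holds. A direct computation gives $G=P(\sum_i T_i\le 2)=0.838$. Averaging the first two coordinates to $(0.1,\,0.1,\,0.9,\,0.9)$ yields $G=0.8443$, which is \emph{larger}. Both values exceed the binomial value $P(\mathrm{Bin}(4,0.5)\le 2)=11/16$, so Hoeffding's inequality is not violated --- only your monotonicity claim is.

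The failure traces back to the sign argument you flagged as the ``main obstacle''. Your heuristic places the mode of $S=\sum_{k\neq i,j}T_k$ near its mean $n\overline q - s$, but the constraint is $c\le n\overline q\le d$, not $c\le n\overline q - s\le d$. In the example above $S=T_3+T_4$ has mode $2=d$, so $P_d-P_{d-1}=0.63>0$ and the bracket is strictly positive. No small perturbation or ``boundary subcase'' patch will fix this: the phenomenon is generic whenever the remaining mass is concentrated near $d$.

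What your computation \emph{does} correctly show is that $G$ is affine in $p=q_iq_j$ for fixed $s=q_i+q_j$. Hoeffding exploits exactly this, but draws the right conclusion: an affine function on the interval $p\in[\max(0,s-1),\,s^2/4]$ attains its minimum at one of the two endpoints. Hence at any global minimiser of $G$, every pair $(q_i,q_j)$ with both coordinates in $(0,1)$ must satisfy either $q_i=q_j$ or one of them lies in $\{0,1\}$; in other words, the minimiser has at most one interior value, repeated. A separate argument is then needed to rule out the configurations with some $q_i\in\{0,1\}$, and this second step is where the condition $c\le n\overline q\le d$ actually enters. Your smoothing-to-the-centroid plan skips both the ``wrong endpoint'' possibility and this second step.
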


This theorem says that within all sets of $n$ choices of Bernoulli variables $\{T_i\}_{i=1}^n$ with a fixed average parameter $\overline q = \frac{1}{n}\sum_i q_i$, the one producing the largest tail distribution for the average variable $\overline T$ is the set of $n$ identically-distributed Bernoulli variables with $q_i=\overline q$ $\forall i$. The tail probability then follows a binomial distribution. Since $q_i=\overline q$ $\forall i$ is an admissible parameter value, this bound is tight.

We recall that the Binomial cumulative distribution can be expressed in terms of the regularized incomplete Beta function $I_\mx(\ma,\mb)$ as
\begin{equation}\label{eq:Isum1}
\sum_{k=0}^d \binom{n}{k} q^k (1-q)^{n-k} = 1-I_{q}(d+1,n-d),
\end{equation}
or equivalently
\begin{equation}\label{eq:Isum2}
\sum_{k=c}^n \binom{n}{k} q^k (1-q)^{n-k} = I_q(c,n-c+1).
\end{equation}
We then denote the inverse regularized incomplete Beta function by $I^{-1}$, i.e.~$I_\my(\ma,\mb)=\mx$ for $\my=I^{-1}_\mx(\ma,\mb)$.

Note that even if Thm.~\ref{thm:Hoeffding} applies to independent (though not necessarily identically distributed) random variables, it is still useful in our context, in which we do not wish to assume rounds to be either independent or identically distributed. The main reason for that is that our figure of merit is given by the average winning probability $\overline q^u$ defined as the average of individual winning probabilities $q_i^u=q^u_{i|\text{past}_i}$ \emph{conditioned} on past events. Hence, even if the physical process under study may depend on previous rounds, the random variables on which we would like to apply the theorem are statistically independent from each other: $P(T^u_i=t_i,T^u_{i+1}=t_{i+1})=P(T^u_i=t_i)P(T^u_{i+1}=t_{i+1})$. Indeed, the random variable $T^u_i$ of each round $i$ is fully characterized by a single and well-definied parameter $q^u_i$. We can thus use this result to 
construct a confidence interval for the average CHSH winning probability $\overline q^u$.

\begin{theorem}\label{thm:thm2}
Given Bernoulli random variables $T_i$ with parameter $q_i$ for $i=1,\ldots,n$ and $0 \leq \alpha \leq 1/2$, the interval $[\hat q,1]$ with the random variable
\begin{equation}
\hat{q} =
\begin{cases}
I^{-1}_{\alpha}(n \overline T-1, n(1-\overline T)+2) & \textrm{if } n\overline T \geq 1 \\
0 & \textrm{if } n\overline T = 0
\end{cases}
\end{equation}
is a confidence interval for $\overline q=\frac{1}{n}\sum_i q_i$ with confidence level $1-\alpha$.
\end{theorem}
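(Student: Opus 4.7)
The claim reduces to showing $P(\hat q > \overline q) \leq \alpha$, which is equivalent to saying $\overline q \in [\hat q, 1]$ with probability at least $1 - \alpha$. My plan is to recast the bad event $\{\hat q > \overline q\}$ as a tail event for $n\overline T$, then apply the one-sided form of Hoeffding's Theorem~\ref{thm:Hoeffding} to bound its probability.

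The cases $n\overline T \in \{0, 1\}$ give $\hat q = 0$ (directly in the first case, and via the natural convention $I^{-1}_\alpha(0, n+1) = 0$ in the second, since $I_x(0, n+1) = 1$ for all $x > 0$), so the bad event is empty there. For $n\overline T \geq 2$, the defining relation $I_{\hat q}(n\overline T - 1, n(1-\overline T) + 2) = \alpha$ together with strict monotonicity of $x \mapsto I_x(a, b)$ yields the equivalence: $\hat q > \overline q$ if and only if $I_{\overline q}(n\overline T - 1, n(1 - \overline T) + 2) < \alpha$. By the binomial-Beta identity Eq.~\eqref{eq:Isum2}, this reads $P(Y \geq n\overline T - 1) < \alpha$ with $Y \sim \mathrm{Binom}(n, \overline q)$. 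Defining $k^{*} = \min\{k \geq 1 : P(Y \geq k - 1) < \alpha\}$, we therefore obtain $\{\hat q > \overline q\} = \{n\overline T \geq k^{*}\}$.

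The one-sided consequence of Hoeffding's theorem (taking $c = 0$) says $P(n\overline T > d) \leq P(Y > d)$ for every integer $d \geq n\overline q$. Setting $d = k^{*} - 1$ would then give $P(n\overline T \geq k^{*}) \leq P(Y \geq k^{*}) \leq P(Y \geq k^{*} - 1) < \alpha$ by the very definition of $k^{*}$. The delicate step---the main obstacle---is verifying the applicability condition $k^{*} - 1 \geq n\overline q$. This is where the hypothesis $\alpha \leq 1/2$ enters: using the standard fact that the median of $Y \sim \mathrm{Binom}(n, \overline q)$ lies in $\{\lfloor n\overline q \rfloor, \lceil n\overline q \rceil\}$, one has $P(Y \geq m) \geq 1/2 \geq \alpha$ whenever the integer $m$ satisfies $m \leq \lfloor n\overline q \rfloor$, which forces $k^{*} - 1 \geq \lfloor n\overline q \rfloor + 1 > n\overline q$ and completes the argument.
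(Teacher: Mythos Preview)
Your proof is correct and follows essentially the same strategy as the paper's: rewrite the confidence event as a tail event for $n\overline T$, invoke Hoeffding's Theorem~\ref{thm:Hoeffding} on that tail, and verify the applicability condition $d\geq n\overline q$ via a median argument exploiting $\alpha\leq 1/2$. The only cosmetic difference is in that last step: the paper bounds the median of the Beta$(z,n-z+1)$ distribution by its mean to obtain $g(n\overline q)\leq \overline q$, whereas you use the Kaas--Buhrman fact that the Binomial median lies in $\{\lfloor n\overline q\rfloor,\lceil n\overline q\rceil\}$; via the Beta--Binomial identity~\eqref{eq:Isum2} these are two sides of the same coin.
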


\begin{proof}
We need to show that
\begin{equation}\label{eq:confInterval}
P(\hat q \leq \overline  q) \geq 1-\alpha
\end{equation}
for all possible sets of Bernoulli variables characterized by parameters $0\leq q_i\leq 1$ with $\frac{1}{n}\sum_i q_i = \overline q$. Condition~\eqref{eq:confInterval} states that whatever the unknown distribution of the Bernoulli variables $T_i$ happens to be, the value of $\hat q$ computed from them (the random variable $\hat q$ depends on the observed $\overline T$) can be higher than the actual parameter $\overline q$ only with probability at most $\alpha$. $\hat q$ then constitutes a reasonable lower bound for the parameter $\overline q$.

The case for which $n\overline T=0$ is clear. We can thus assume that $n\overline T\geq 1$. But before starting, let us introduce the function
\begin{equation}
\begin{split}
g:&\mathbb{R}\to\mathbb{R}\\
&z\mapsto I^{-1}_\alpha(z, n - z + 1).
\end{split}
\end{equation}
This function describes the trade-off between the parameters $q$ and $z$ in $I_q(z, n - z + 1)$. Let us remember that the incomplete beta function $I_q(\alpha,\beta)$ is the cumulative distribution function for the beta distribution with parameter $\alpha$ and $\beta$. Therefore, $I_q(z,n-z+1)$ is strictly increasing on $q\in[0,1]$ even for non-integer values of $n$ and $z$. At the same time, $1-I_q(z,n-z+1)$ seen as a function of $z\in\mathbb{R}$ is a cumulative distribution for the continuous binomial distribution with parameter $q$, so it strictly increases with $z\in[0,n+1]$~\cite{Ilienko13}. In other words, $I_q(z,n-z+1)$ strictly decreases with $z$. Since $g(z)$ is defined as the value of $q$ which leaves $I_q(z,n-z+1)$ invariant (and equal to $\alpha$) when $z$ changes, it is a strictly increasing function of $z\in[0,n+1]$: increasing $z$ increases $I_q(z,n-z+1)$ unless $q$ increases as well.

Let us now write
\begin{equation}\label{eq:P}
\begin{split}
P(\hat q \leq \overline q)
&=P(n\overline T = 0)\\
&\ \ \ + \sum_{k=1}^n P(n\overline T = k)\chi(g(k-1) \leq \overline q)\\
&= \sum_{k=0}^{d} P(n\overline T = k)\\
\end{split}
\end{equation}
where $P(n\overline T=k)$ is the probability distribution of the sum $n\overline T$ of arbitrary Bernoulli random variables and $d$ is the largest integer such that $g(d-1) \leq \overline q$, hence $g(d) > \overline q$. The sum contains all terms between 0 and $d$ because $g(z)$ is an increasing function.

Another implication of the monotonicity of $I_{\overline q}(z,n-z+1)$ with respect to q, is that its inverse function $I^{-1}_\alpha(z,n-z+1)$ increases with $\alpha$. Therefore
\begin{equation}
I^{-1}_\alpha(z,n-z+1) \leq I^{-1}_{1/2}(z,n-z+1)
\end{equation}
for $\alpha \leq 1/2$. $I^{-1}_{1/2}(z,n-z+1)$ is the median of a beta distribution, which can always be bounded by its mean~\cite{vanDeVen93}:
\begin{equation}
\begin{split}
I^{-1}_{1/2}(z,n-z+1) &\leq
\begin{cases}
\frac{z}{n+1} & \text{if }z \leq n-z+1\\
1-\frac{n-z+1}{n+1} & \text{else}
\end{cases}\\
&=\frac{z}{n+1}.
\end{split}
\end{equation}
Therefore, we have
\begin{equation}
I^{-1}_\alpha(n\overline q, n(1-\overline q)+1)\leq \frac{n\overline q}{n+1}\leq \overline q.
\end{equation}

Since $g(d) \geq \overline q$ by the definition of $d$, we obtain
\begin{align}
g(d) &\geq I^{-1}_\alpha(n\overline q, n(1-\overline q)+1)=g(n\overline q).
\end{align}
By the strict monotonicity of $g$, this implies that the condition $d\geq n\overline q$ is satisfied. So we can use Thm.~\eqref{thm:Hoeffding} to lower bound the probability Eq.~\eqref{eq:P} by the binomial case:
\begin{equation}\label{eq: P to binomial}
\begin{split}
P(\hat q \leq \overline q) &= \sum_{k=0}^{d} P(n\overline T = k) \\
&= P(n\overline T \leq d)\\
& \geq \sum_{k=0}^{d} \binom{n}{k} {\overline q}^k(1-\overline q)^{n-k}
\end{split}
\end{equation}

Using Eq.~\eqref{eq:Isum1} we obtain
\be\label{eq: P to beta}
P(\hat q \leq \overline q) \geq 1- I_{\overline q }(d+1, n-d).
\ee
Since $g(d)\geq \overline q$, or
\be\label{eq: Ialpha}
I_\alpha^{-1}(d, n-d+1) \geq \overline q
\ee
and $I_{\overline q}(d,n-d+1)$ is an increasing function of $\overline q$ we can write:
\begin{equation}
\begin{split}
I_{I_\alpha^{-1}(d,n-d+1)}(d,n-d+1)=\alpha &\geq I_{\overline q}(d,n-d+1)\\
&\geq I_{\overline q}(d+1,n-d),
\end{split}
\end{equation}
where we applied the incomplete beta function to both sides of Eq.~\eqref{eq: Ialpha} and used the monotonicity of the beta function again in the last line. Combining with Eq.~\eqref{eq: P to beta} completes the proof
\be
\begin{split}
P(\hat q \leq \overline q) &\geq 1- I_{\overline q }(d+1, n-d)\\
&\geq 1-\alpha.
\end{split}
\ee

\end{proof}

\section{Relation with the global fidelity}\label{app:Fidelities}
In this work, we quantify the quality of a multi-round state by its average Bell state fidelity over all rounds:
\begin{equation}\label{eq:ourFid}
\overline \cF = \frac{1}{n}\sum_{i=1}^n \cF_i,
\end{equation}
see Eq.~\eqref{eq:averageFidelity} of the main text. Other multi-round self-testing works have rather used the global fidelity~\cite{Supic19}
\begin{equation}\label{eq:otherFid}
F_g = \bra{\phi^+}^{\otimes n}\rho\ket{\phi^+}^{\otimes n},
\end{equation}
were $\ket{\phi^+}^{\otimes n}$ is the state of $n$ copies of an maximally entangled two-qubits state. The quantifier $F_g$ was used both in the sequential~\cite{Reichard13} and parallel repetition setting~\cite{McKague16,Natarajan17,Coladangelo17b}. In this appendix, we discuss the relation between these two fidelities. As we show below, strict bounds relate the two fidelity definitions, meaning that they are operationally equivalent (up to some rescaling and finite correction), see also~\cite{Sabat20}.

Although these two fidelities are equivalent to each other, it is worth noting that their actual values scale differently in presence of a finite level of experimental noise. To see this, consider the case of repeatedly measuring a Werner state
\begin{equation}\label{eq:WernerEpsilon}
\rho_i=(1-\frac{4}{3}\epsilon)\ketbra{\phi^+}{\phi^+} + \epsilon\frac{\Id}{3}
\end{equation}
at each round $i=1,\ldots,n$. In this case, the expression given in Eq.~\eqref{eq:otherFid} decreases with the number of rounds: $F_g=(1-\epsilon)^n\simeq 1-n\epsilon$ for small $\epsilon$. Therefore, this quantity does not directly reflect the quality of the setup that was used to create the state: knowledge of the number of rounds $n$ is needed to deduce the value of $\epsilon$ from $F_g$. In contrast, the value of Eq.~\eqref{eq:ourFid} is here $\overline \cF = 1-\epsilon$ for all $n$. Hence, the average fidelity does not depend on the number of rounds $n$ performed during the experiment and directly reflects the quality of the source.

For concreteness and simplicity, we now consider the estimation of both fidelities above on an arbitrary global state $\rho$ and with fixed extraction maps. This case is compatible with both the sequential and parallel repetition scenarios~\cite{Arnon-Friedman20}.

\begin{theorem}
When evaluating \eqref{eq:ourFid} and \eqref{eq:otherFid} an a global state $\rho\in L(\cH_A^{\otimes n} \otimes \cH_B^{\otimes n})$, the following inequalities hold:
\begin{equation}\label{eq:compareFids}
1-\overline\cF \leq 1-F_g \leq n(1-\overline\cF).
\end{equation}
Moreover, these bounds are tight.
\end{theorem}

\begin{proof}
Let us decompose the state $\rho$ across the $n$ rounds. For each round, we further decompose the Hilbert space of Alice and Bob into a first part spanned by $\ket{\phi^+}$ and the rest of the space. Since in the case of both fidelities we are only interested in the overlaps with the $\ket{\phi^+}$ state for each round, we can neglect all coherence between these subspaces. Without loss of generality, the full state can then be written in the form
\begin{equation}\label{eq:state1}
\rho = \sum_{\vec v\in \{0,1\}^{n}} c_{\vec v} \sigma_{\vec v}^1\otimes\sigma_{\vec{v}}^2\otimes\ldots.
\end{equation}
Here, $c_{\vec v}\geq 0$ $\forall \vec v$, $\sum_{\vec v\in \{0,1\}^n}c_{\vec v}=1$ and $\sigma_{\vec v}^i$ are quantum states s.t. $\sigma_{\vec v}^i=\ketbra{\phi^+}{\phi^+}$ for $v_i=0$ and $\tr{\sigma_{\vec v}^i\ketbra{\phi^+}{\phi^+}}=0$ otherwise.

Noting that the single round fidelities in~\eqref{eq:ourFid} are given by $\cF_i=\bra{\phi^+} \rho_i \ket{\phi^+}$, where $\rho_i$ is the partial trace of $\rho$ over all rounds except round $i$, the two fidelities can now be written explicitly in terms of the $c_{\vec v}$ coefficients of $\rho$:
\begin{eqnarray}
\cF_i &=& \sum_{\vec v:v_i=0} c_{\vec v}\\
F_g &=& c_{(0,0,\ldots,0)}.
\end{eqnarray}
Since the component $c_{(0,0,\ldots,0)}$ appears in both expressions, it is clear that $F_g$ cannot be larger than and $\cF_i$ for any $i$. Therefore, it also cannot be larger than their mean: $\overline\cF\geq F_g$, and we have the upper bound of Eq.~\eqref{eq:compareFids}. The choice $c_{1,1,\ldots,1}=1-c_{0,0,\ldots,0}$ saturates this bound.

To show the opposite bound, we first note that the quantities $\cF$ and $F_g$ are invariant under permutation of the rounds. We can thus symmetrize the state~\eqref{eq:state1} and express it in terms of just $n+1$ parameters $d_j$: after symmetrization, $\rho$ becomes
\begin{equation}
\rho' = d_n \ketbra{\phi^+}{\phi^+}^{\otimes n} + d_{n-1} \ketbra{\phi^+}{\phi^+}^{\otimes n-1}\otimes \tau + \ldots.
\end{equation}
The fidelities then take the form
\begin{eqnarray}\label{eq:fids}
\overline\cF=\cF_i &=& \sum_{j=1}^n \binom{n-1}{j-1} d_j\\
F_g &=& d_n
\end{eqnarray}
and we have the normalization condition
\begin{equation}\label{eq:norm}
\sum_{\vec v\in \{0,1\}^n}c_{\vec v}=\sum_{j=0}^n \binom{n}{j} d_j=1,
\end{equation}
and positivity condition $d_j\geq 0$.

We can now write
\begin{widetext}
\begin{eqnarray}
F_g &=& d_n \\
&=& \left((n-1)-(n-1)\right) d_0 + \sum_{j=1}^{n-1}\left[\left((n-1)\binom{n}{j} - n\binom{n-1}{j-1}\right) - (n-1)\binom{n}{i} + n\binom{n-1}{j-1}\right] d_j + d_n\nonumber\\
&=& (n-1) d_0 + \sum_{j=1}^{n-1}\left((n-1)\binom{n}{j} - n\binom{n-1}{j-1}\right)d_j - (n-1)\sum_{j=0}^n\binom{n}{j}d_j + n\sum_{j=1}^n\binom{n-1}{j-1}d_j\\
&\geq& 1 - n(1-\overline \cF)
\end{eqnarray}
\end{widetext}
as desired. Here, we used the relations above and $(n-1)\binom{n}{j} - n\binom{n-1}{j-1} \geq 0$, which is true for $j\leq n-1$. The inequality $1-F_g\geq n(1-\overline\cF)$ is saturated for the choice $d_{n-1}=(1-d_n)/n$, $d_j=0\ \forall j\leq n-2$.
\end{proof}

\bibliographystyle{apsrev4-1_modified}
\nocite{apsrev41Control} 
\bibliography{references.bib}

\begin{thebibliography}{68}%
\makeatletter
\providecommand \@ifxundefined [1]{%
 \@ifx{#1\undefined}
}%
\providecommand \@ifnum [1]{%
 \ifnum #1\expandafter \@firstoftwo
 \else \expandafter \@secondoftwo
 \fi
}%
\providecommand \@ifx [1]{%
 \ifx #1\expandafter \@firstoftwo
 \else \expandafter \@secondoftwo
 \fi
}%
\providecommand \natexlab [1]{#1}%
\providecommand \enquote  [1]{#1}%
\providecommand \bibnamefont  [1]{#1}%
\providecommand \bibfnamefont [1]{#1}%
\providecommand \citenamefont [1]{#1}%
\providecommand \href@noop [0]{\@secondoftwo}%
\providecommand \href [0]{\begingroup \@sanitize@url \@href}%
\providecommand \@href[1]{\@@startlink{#1}\@@href}%
\providecommand \@@href[1]{\endgroup#1\@@endlink}%
\providecommand \@sanitize@url [0]{\catcode `\\12\catcode `\$12\catcode
  `\&12\catcode `\#12\catcode `\^12\catcode `\_12\catcode `\%12\relax}%
\providecommand \@@startlink[1]{}%
\providecommand \@@endlink[0]{}%
\providecommand \url  [0]{\begingroup\@sanitize@url \@url }%
\providecommand \@url [1]{\endgroup\@href {#1}{\urlprefix }}%
\providecommand \urlprefix  [0]{URL }%
\providecommand \Eprint [0]{\href }%
\providecommand \doibase [0]{http://dx.doi.org/}%
\providecommand \selectlanguage [0]{\@gobble}%
\providecommand \bibinfo  [0]{\@secondoftwo}%
\providecommand \bibfield  [0]{\@secondoftwo}%
\providecommand \translation [1]{[#1]}%
\providecommand \BibitemOpen [0]{}%
\providecommand \bibitemStop [0]{}%
\providecommand \bibitemNoStop [0]{.\EOS\space}%
\providecommand \EOS [0]{\spacefactor3000\relax}%
\providecommand \BibitemShut  [1]{\csname bibitem#1\endcsname}%
\let\auto@bib@innerbib\@empty
\bibitem [{\citenamefont {Kimble}(2008)}]{Kimble08}%
  \BibitemOpen
  \bibfield  {author} {\bibinfo {author} {\bibfnamefont {H.~J.}\ \bibnamefont
  {Kimble}},\ }\bibfield  {title} {\enquote {\bibinfo {title} {The quantum
  internet},}\ }\href {\doibase 10.1038/nature07127} {\bibfield  {journal}
  {\bibinfo  {journal} {Nature (London)}\ }\textbf {\bibinfo {volume} {453}},\
  \bibinfo {pages} {1023} (\bibinfo {year} {2008})}\BibitemShut {NoStop}%
\bibitem [{\citenamefont {Gisin}\ and\ \citenamefont {Thew}(2007)}]{Gisin07}%
  \BibitemOpen
  \bibfield  {author} {\bibinfo {author} {\bibfnamefont {N.}~\bibnamefont
  {Gisin}}\ and\ \bibinfo {author} {\bibfnamefont {R.}~\bibnamefont {Thew}},\
  }\bibfield  {title} {\enquote {\bibinfo {title} {Quantum communication},}\
  }\href {\doibase 10.1038/nphoton.2007.22} {\bibfield  {journal} {\bibinfo
  {journal} {Nature Photonics}\ }\textbf {\bibinfo {volume} {1}},\ \bibinfo
  {pages} {165} (\bibinfo {year} {2007})}\BibitemShut {NoStop}%
\bibitem [{\citenamefont {Knill}\ and\ \citenamefont
  {Laflamme}(1996)}]{Knill96}%
  \BibitemOpen
  \bibfield  {author} {\bibinfo {author} {\bibfnamefont {E.}~\bibnamefont
  {Knill}}\ and\ \bibinfo {author} {\bibfnamefont {R.}~\bibnamefont
  {Laflamme}},\ }\bibfield  {title} {\enquote {\bibinfo {title} {Concatenated
  quantum codes},}\ }\href@noop {} {\  (\bibinfo {year} {1996})},\ \Eprint
  {http://arxiv.org/abs/quant-ph/9608012}{arXiv:quant-ph/9608012}\BibitemShut
  {NoStop}%
\bibitem [{\citenamefont {Briegel}\ \emph {et~al.}(1998)\citenamefont
  {Briegel}, \citenamefont {Dur}, \citenamefont {Cirac},\ and\ \citenamefont
  {Zoller}}]{Briegel98}%
  \BibitemOpen
  \bibfield  {author} {\bibinfo {author} {\bibfnamefont {H.-J.}\ \bibnamefont
  {Briegel}}, \bibinfo {author} {\bibfnamefont {W.}~\bibnamefont {Dur}},
  \bibinfo {author} {\bibfnamefont {J.~I.}\ \bibnamefont {Cirac}}, \ and\
  \bibinfo {author} {\bibfnamefont {P.}~\bibnamefont {Zoller}},\ }\bibfield
  {title} {\enquote {\bibinfo {title} {Quantum repeaters: the role of imperfect
  local operations in quantum communication},}\ }\href {\doibase
  10.1103/PhysRevLett.81.5932} {\bibfield  {journal} {\bibinfo  {journal}
  {Phys. Rev. Lett.}\ }\textbf {\bibinfo {volume} {81}},\ \bibinfo {pages}
  {5932} (\bibinfo {year} {1998})}\BibitemShut {NoStop}%
\bibitem [{\citenamefont {Sangouard}\ \emph {et~al.}(2011)\citenamefont
  {Sangouard}, \citenamefont {Simon}, \citenamefont {De~Riedmatten},\ and\
  \citenamefont {Gisin}}]{Sangouard11}%
  \BibitemOpen
  \bibfield  {author} {\bibinfo {author} {\bibfnamefont {N.}~\bibnamefont
  {Sangouard}}, \bibinfo {author} {\bibfnamefont {C.}~\bibnamefont {Simon}},
  \bibinfo {author} {\bibfnamefont {H.}~\bibnamefont {De~Riedmatten}}, \ and\
  \bibinfo {author} {\bibfnamefont {N.}~\bibnamefont {Gisin}},\ }\bibfield
  {title} {\enquote {\bibinfo {title} {Quantum repeaters based on atomic
  ensembles and linear optics},}\ }\href {\doibase 10.1103/RevModPhys.83.33}
  {\bibfield  {journal} {\bibinfo  {journal} {Rev. Mod. Phys.}\ }\textbf
  {\bibinfo {volume} {83}},\ \bibinfo {pages} {33} (\bibinfo {year}
  {2011})}\BibitemShut {NoStop}%
\bibitem [{\citenamefont {Pirandola}(2016)}]{Pirandola16}%
  \BibitemOpen
  \bibfield  {author} {\bibinfo {author} {\bibfnamefont {S.}~\bibnamefont
  {Pirandola}},\ }\bibfield  {title} {\enquote {\bibinfo {title} {Capacities of
  repeater-assisted quantum communications},}\ }\href@noop {} {\  (\bibinfo
  {year} {2016})},\ \Eprint
  {http://arxiv.org/abs/1601.00966}{arXiv:1601.00966}\BibitemShut {NoStop}%
\bibitem [{\citenamefont {Schoute}\ \emph {et~al.}()\citenamefont {Schoute},
  \citenamefont {Mancinska}, \citenamefont {Islam}, \citenamefont {Kerenidis},\
  and\ \citenamefont {S.}}]{Schoute16}%
  \BibitemOpen
  \bibfield  {author} {\bibinfo {author} {\bibfnamefont {E.}~\bibnamefont
  {Schoute}}, \bibinfo {author} {\bibfnamefont {L.}~\bibnamefont {Mancinska}},
  \bibinfo {author} {\bibfnamefont {T.}~\bibnamefont {Islam}}, \bibinfo
  {author} {\bibfnamefont {I.}~\bibnamefont {Kerenidis}}, \ and\ \bibinfo
  {author} {\bibfnamefont {W.}~\bibnamefont {S.}},\ }\bibfield  {title}
  {\enquote {\bibinfo {title} {Shortcuts to quantum network routing},}\
  }\href@noop {} {\ }\Eprint
  {http://arxiv.org/abs/1610.05238}{arXiv:1610.05238}\BibitemShut {NoStop}%
\bibitem [{\citenamefont {Pant}\ \emph {et~al.}()\citenamefont {Pant},
  \citenamefont {Krovi}, \citenamefont {Towsley}, \citenamefont {Tassiulas},
  \citenamefont {Jiang}, \citenamefont {Basu}, \citenamefont {Englund},\ and\
  \citenamefont {Guha}}]{Pant17}%
  \BibitemOpen
  \bibfield  {author} {\bibinfo {author} {\bibfnamefont {M.}~\bibnamefont
  {Pant}}, \bibinfo {author} {\bibfnamefont {H.}~\bibnamefont {Krovi}},
  \bibinfo {author} {\bibfnamefont {D.}~\bibnamefont {Towsley}}, \bibinfo
  {author} {\bibfnamefont {L.}~\bibnamefont {Tassiulas}}, \bibinfo {author}
  {\bibfnamefont {L.}~\bibnamefont {Jiang}}, \bibinfo {author} {\bibfnamefont
  {P.}~\bibnamefont {Basu}}, \bibinfo {author} {\bibfnamefont {D.}~\bibnamefont
  {Englund}}, \ and\ \bibinfo {author} {\bibfnamefont {S.}~\bibnamefont
  {Guha}},\ }\bibfield  {title} {\enquote {\bibinfo {title} {Routing
  entanglement in the quantum internet},}\ }\href@noop {} {\ }\Eprint
  {http://arxiv.org/abs/1708.07142}{arXiv:1708.07142}\BibitemShut {NoStop}%
\bibitem [{\citenamefont {Wallnofer}\ \emph {et~al.}(2016)\citenamefont
  {Wallnofer}, \citenamefont {Zwerger}, \citenamefont {Muschik}, \citenamefont
  {Sangouard},\ and\ \citenamefont {Dur}}]{Wallnofer16}%
  \BibitemOpen
  \bibfield  {author} {\bibinfo {author} {\bibfnamefont {J.}~\bibnamefont
  {Wallnofer}}, \bibinfo {author} {\bibfnamefont {M.}~\bibnamefont {Zwerger}},
  \bibinfo {author} {\bibfnamefont {C.}~\bibnamefont {Muschik}}, \bibinfo
  {author} {\bibfnamefont {N.}~\bibnamefont {Sangouard}}, \ and\ \bibinfo
  {author} {\bibfnamefont {W.}~\bibnamefont {Dur}},\ }\bibfield  {title}
  {\enquote {\bibinfo {title} {Two-dimensional quantum repeaters},}\ }\href
  {\doibase 10.1103/PhysRevA.94.052307} {\bibfield  {journal} {\bibinfo
  {journal} {Phys. Rev. A}\ }\textbf {\bibinfo {volume} {94}},\ \bibinfo
  {pages} {052307} (\bibinfo {year} {2016})}\BibitemShut {NoStop}%
\bibitem [{\citenamefont {Epping}\ \emph {et~al.}(2016)\citenamefont {Epping},
  \citenamefont {Kampermann},\ and\ \citenamefont {Bruss}}]{Epping16}%
  \BibitemOpen
  \bibfield  {author} {\bibinfo {author} {\bibfnamefont {M.}~\bibnamefont
  {Epping}}, \bibinfo {author} {\bibfnamefont {H.}~\bibnamefont {Kampermann}},
  \ and\ \bibinfo {author} {\bibfnamefont {D.}~\bibnamefont {Bruss}},\
  }\bibfield  {title} {\enquote {\bibinfo {title} {Large-scale quantum networks
  based on graphs},}\ }\href {\doibase 10.1088/1367-2630/18/5/053036}
  {\bibfield  {journal} {\bibinfo  {journal} {New J. Phys.}\ }\textbf {\bibinfo
  {volume} {18}},\ \bibinfo {pages} {053036} (\bibinfo {year}
  {2016})}\BibitemShut {NoStop}%
\bibitem [{\citenamefont {Das}\ \emph {et~al.}(2018)\citenamefont {Das},
  \citenamefont {Khatri},\ and\ \citenamefont {Dowling}}]{Das17}%
  \BibitemOpen
  \bibfield  {author} {\bibinfo {author} {\bibfnamefont {S.}~\bibnamefont
  {Das}}, \bibinfo {author} {\bibfnamefont {S.}~\bibnamefont {Khatri}}, \ and\
  \bibinfo {author} {\bibfnamefont {J.~P.}\ \bibnamefont {Dowling}},\
  }\bibfield  {title} {\enquote {\bibinfo {title} {Robust quantum network
  architectures and topologies for entanglement distribution},}\ }\href
  {\doibase 10.1103/PhysRevA.97.012335} {\bibfield  {journal} {\bibinfo
  {journal} {Physical Review A}\ }\textbf {\bibinfo {volume} {97}},\ \bibinfo
  {pages} {012335} (\bibinfo {year} {2018})}\BibitemShut {NoStop}%
\bibitem [{\citenamefont {Matsukevich}\ \emph {et~al.}(2006)\citenamefont
  {Matsukevich}, \citenamefont {Chanelière}, \citenamefont {Jenkins},
  \citenamefont {Lan}, \citenamefont {Kennedy},\ and\ \citenamefont
  {Kuzmich}}]{Matsukevich06}%
  \BibitemOpen
  \bibfield  {author} {\bibinfo {author} {\bibfnamefont {D.~N.}\ \bibnamefont
  {Matsukevich}}, \bibinfo {author} {\bibfnamefont {T.}~\bibnamefont
  {Chanelière}}, \bibinfo {author} {\bibfnamefont {S.~D.}\ \bibnamefont
  {Jenkins}}, \bibinfo {author} {\bibfnamefont {S.-Y.}\ \bibnamefont {Lan}},
  \bibinfo {author} {\bibfnamefont {T.~A.~B.}\ \bibnamefont {Kennedy}}, \ and\
  \bibinfo {author} {\bibfnamefont {A.}~\bibnamefont {Kuzmich}},\ }\bibfield
  {title} {\enquote {\bibinfo {title} {Entanglement of remote atomic qubits},}\
  }\href {https://link.aps.org/doi/10.1103/PhysRevLett.96.030405} {\bibfield
  {journal} {\bibinfo  {journal} {Phys. Rev. Lett.}\ }\textbf {\bibinfo
  {volume} {96}},\ \bibinfo {pages} {030405} (\bibinfo {year}
  {2006})}\BibitemShut {NoStop}%
\bibitem [{\citenamefont {Chou}\ \emph {et~al.}(2007)\citenamefont {Chou},
  \citenamefont {Laurat}, \citenamefont {Deng}, \citenamefont {Choi},
  \citenamefont {de~Riedmatten}, \citenamefont {Felinto},\ and\ \citenamefont
  {Kimble}}]{Chou07}%
  \BibitemOpen
  \bibfield  {author} {\bibinfo {author} {\bibfnamefont {C.-W.}\ \bibnamefont
  {Chou}}, \bibinfo {author} {\bibfnamefont {J.}~\bibnamefont {Laurat}},
  \bibinfo {author} {\bibfnamefont {H.}~\bibnamefont {Deng}}, \bibinfo {author}
  {\bibfnamefont {K.~S.}\ \bibnamefont {Choi}}, \bibinfo {author}
  {\bibfnamefont {H.}~\bibnamefont {de~Riedmatten}}, \bibinfo {author}
  {\bibfnamefont {D.}~\bibnamefont {Felinto}}, \ and\ \bibinfo {author}
  {\bibfnamefont {H.~J.}\ \bibnamefont {Kimble}},\ }\bibfield  {title}
  {\enquote {\bibinfo {title} {Functional quantum nodes for entanglement
  distribution over scalable quantum networks},}\ }\href {\doibase
  10.1126/science.1140300} {\bibfield  {journal} {\bibinfo  {journal}
  {Science}\ }\textbf {\bibinfo {volume} {316}},\ \bibinfo {pages} {1316}
  (\bibinfo {year} {2007})}\BibitemShut {NoStop}%
\bibitem [{\citenamefont {Yuan}\ \emph {et~al.}(2008)\citenamefont {Yuan},
  \citenamefont {Chen}, \citenamefont {Zhao}, \citenamefont {Chen},
  \citenamefont {Schmiedmayer},\ and\ \citenamefont {Pan}}]{Yuan08}%
  \BibitemOpen
  \bibfield  {author} {\bibinfo {author} {\bibfnamefont {Z.-S.}\ \bibnamefont
  {Yuan}}, \bibinfo {author} {\bibfnamefont {Y.-A.}\ \bibnamefont {Chen}},
  \bibinfo {author} {\bibfnamefont {B.}~\bibnamefont {Zhao}}, \bibinfo {author}
  {\bibfnamefont {S.}~\bibnamefont {Chen}}, \bibinfo {author} {\bibfnamefont
  {J.}~\bibnamefont {Schmiedmayer}}, \ and\ \bibinfo {author} {\bibfnamefont
  {J.-W.}\ \bibnamefont {Pan}},\ }\bibfield  {title} {\enquote {\bibinfo
  {title} {Experimental demonstration of a bdcz quantum repeater node},}\
  }\href {\doibase 10.1038/nature07241} {\bibfield  {journal} {\bibinfo
  {journal} {Nature (London)}\ }\textbf {\bibinfo {volume} {454}},\ \bibinfo
  {pages} {1098} (\bibinfo {year} {2008})}\BibitemShut {NoStop}%
\bibitem [{\citenamefont {Moehring}\ \emph {et~al.}(2007)\citenamefont
  {Moehring}, \citenamefont {Maunz}, \citenamefont {Olmschenk}, \citenamefont
  {Younge}, \citenamefont {Matsukevich}, \citenamefont {Duan},\ and\
  \citenamefont {Monroe}}]{Moehring07}%
  \BibitemOpen
  \bibfield  {author} {\bibinfo {author} {\bibfnamefont {D.~L.}\ \bibnamefont
  {Moehring}}, \bibinfo {author} {\bibfnamefont {P.}~\bibnamefont {Maunz}},
  \bibinfo {author} {\bibfnamefont {S.}~\bibnamefont {Olmschenk}}, \bibinfo
  {author} {\bibfnamefont {K.~C.}\ \bibnamefont {Younge}}, \bibinfo {author}
  {\bibfnamefont {D.~N.}\ \bibnamefont {Matsukevich}}, \bibinfo {author}
  {\bibfnamefont {L.-M.}\ \bibnamefont {Duan}}, \ and\ \bibinfo {author}
  {\bibfnamefont {C.}~\bibnamefont {Monroe}},\ }\bibfield  {title} {\enquote
  {\bibinfo {title} {Entanglement of single-atom quantum bits at a distance},}\
  }\href {\doibase 10.1038/nature06118} {\bibfield  {journal} {\bibinfo
  {journal} {Nature (London)}\ }\textbf {\bibinfo {volume} {449}},\ \bibinfo
  {pages} {68} (\bibinfo {year} {2007})}\BibitemShut {NoStop}%
\bibitem [{\citenamefont {Matsukevich}\ \emph {et~al.}(2008)\citenamefont
  {Matsukevich}, \citenamefont {Maunz}, \citenamefont {Moehring}, \citenamefont
  {Olmschenck},\ and\ \citenamefont {Monroe}}]{Matsukevich08}%
  \BibitemOpen
  \bibfield  {author} {\bibinfo {author} {\bibfnamefont {D.~N.}\ \bibnamefont
  {Matsukevich}}, \bibinfo {author} {\bibfnamefont {P.}~\bibnamefont {Maunz}},
  \bibinfo {author} {\bibfnamefont {D.~L.}\ \bibnamefont {Moehring}}, \bibinfo
  {author} {\bibfnamefont {S.}~\bibnamefont {Olmschenck}}, \ and\ \bibinfo
  {author} {\bibfnamefont {C.}~\bibnamefont {Monroe}},\ }\bibfield  {title}
  {\enquote {\bibinfo {title} {Bell inequality violation with two remote atomic
  qubits},}\ }\href {\doibase 10.1103/PhysRevLett.100.150404} {\bibfield
  {journal} {\bibinfo  {journal} {Phys. Rev. Lett.}\ }\textbf {\bibinfo
  {volume} {100}},\ \bibinfo {pages} {150404} (\bibinfo {year}
  {2008})}\BibitemShut {NoStop}%
\bibitem [{\citenamefont {Lee}\ \emph {et~al.}(2011)\citenamefont {Lee},
  \citenamefont {Sprague}, \citenamefont {Sussman}, \citenamefont {Nunn},
  \citenamefont {Langford}, \citenamefont {Jin}, \citenamefont {Champion},
  \citenamefont {Michelberger}, \citenamefont {Reim}, \citenamefont {England},
  \citenamefont {Jaksch},\ and\ \citenamefont {Walmsley}}]{Lee11}%
  \BibitemOpen
  \bibfield  {author} {\bibinfo {author} {\bibfnamefont {K.~C.}\ \bibnamefont
  {Lee}}, \bibinfo {author} {\bibfnamefont {M.~R.}\ \bibnamefont {Sprague}},
  \bibinfo {author} {\bibfnamefont {B.~J.}\ \bibnamefont {Sussman}}, \bibinfo
  {author} {\bibfnamefont {J.}~\bibnamefont {Nunn}}, \bibinfo {author}
  {\bibfnamefont {N.~K.}\ \bibnamefont {Langford}}, \bibinfo {author}
  {\bibfnamefont {X.-M.}\ \bibnamefont {Jin}}, \bibinfo {author} {\bibfnamefont
  {T.}~\bibnamefont {Champion}}, \bibinfo {author} {\bibfnamefont
  {P.}~\bibnamefont {Michelberger}}, \bibinfo {author} {\bibfnamefont {K.~F.}\
  \bibnamefont {Reim}}, \bibinfo {author} {\bibfnamefont {D.}~\bibnamefont
  {England}}, \bibinfo {author} {\bibfnamefont {D.}~\bibnamefont {Jaksch}}, \
  and\ \bibinfo {author} {\bibfnamefont {I.~A.}\ \bibnamefont {Walmsley}},\
  }\bibfield  {title} {\enquote {\bibinfo {title} {Entangling macroscopic
  diamonds at room temperature},}\ }\href {\doibase 10.1126/science.1211914}
  {\bibfield  {journal} {\bibinfo  {journal} {Science}\ }\textbf {\bibinfo
  {volume} {334}},\ \bibinfo {pages} {1253} (\bibinfo {year}
  {2011})}\BibitemShut {NoStop}%
\bibitem [{\citenamefont {Ritter}\ \emph {et~al.}(2012)\citenamefont {Ritter},
  \citenamefont {Nölleke}, \citenamefont {Hahn}, \citenamefont {Reiserer},
  \citenamefont {Neuzner}, \citenamefont {Uphoff}, \citenamefont {Mücke},
  \citenamefont {Figueroa}, \citenamefont {Bochmann},\ and\ \citenamefont
  {Rempe}}]{Ritter12}%
  \BibitemOpen
  \bibfield  {author} {\bibinfo {author} {\bibfnamefont {S.}~\bibnamefont
  {Ritter}}, \bibinfo {author} {\bibfnamefont {C.}~\bibnamefont {Nölleke}},
  \bibinfo {author} {\bibfnamefont {C.}~\bibnamefont {Hahn}}, \bibinfo {author}
  {\bibfnamefont {A.}~\bibnamefont {Reiserer}}, \bibinfo {author}
  {\bibfnamefont {A.}~\bibnamefont {Neuzner}}, \bibinfo {author} {\bibfnamefont
  {M.}~\bibnamefont {Uphoff}}, \bibinfo {author} {\bibfnamefont
  {M.}~\bibnamefont {Mücke}}, \bibinfo {author} {\bibfnamefont
  {E.}~\bibnamefont {Figueroa}}, \bibinfo {author} {\bibfnamefont
  {J.}~\bibnamefont {Bochmann}}, \ and\ \bibinfo {author} {\bibfnamefont
  {G.}~\bibnamefont {Rempe}},\ }\bibfield  {title} {\enquote {\bibinfo {title}
  {An elementary quantum network of single atoms in optical cavities},}\ }\href
  {\doibase 10.1038/nature11023} {\bibfield  {journal} {\bibinfo  {journal}
  {Nature (London)}\ }\textbf {\bibinfo {volume} {484}},\ \bibinfo {pages}
  {195} (\bibinfo {year} {2012})}\BibitemShut {NoStop}%
\bibitem [{\citenamefont {Hofmann}\ \emph {et~al.}(2012)\citenamefont
  {Hofmann}, \citenamefont {Krug}, \citenamefont {Ortegel}, \citenamefont
  {Gérard}, \citenamefont {Weber}, \citenamefont {Rosenfeld},\ and\
  \citenamefont {Weinfurter}}]{Hofmann12}%
  \BibitemOpen
  \bibfield  {author} {\bibinfo {author} {\bibfnamefont {J.}~\bibnamefont
  {Hofmann}}, \bibinfo {author} {\bibfnamefont {M.}~\bibnamefont {Krug}},
  \bibinfo {author} {\bibfnamefont {N.}~\bibnamefont {Ortegel}}, \bibinfo
  {author} {\bibfnamefont {L.}~\bibnamefont {Gérard}}, \bibinfo {author}
  {\bibfnamefont {M.}~\bibnamefont {Weber}}, \bibinfo {author} {\bibfnamefont
  {W.}~\bibnamefont {Rosenfeld}}, \ and\ \bibinfo {author} {\bibfnamefont
  {H.}~\bibnamefont {Weinfurter}},\ }\bibfield  {title} {\enquote {\bibinfo
  {title} {Heralded entanglement between widely separated atoms},}\ }\href
  {\doibase 10.1126/science.1221856} {\bibfield  {journal} {\bibinfo  {journal}
  {Science}\ }\textbf {\bibinfo {volume} {337}},\ \bibinfo {pages} {72}
  (\bibinfo {year} {2012})}\BibitemShut {NoStop}%
\bibitem [{\citenamefont {Bernien}\ \emph {et~al.}(2013)\citenamefont
  {Bernien}, \citenamefont {Hensen}, \citenamefont {Pfaff}, \citenamefont
  {Koolstra}, \citenamefont {Blok}, \citenamefont {Robledo}, \citenamefont
  {Taminiau}, \citenamefont {Markham}, \citenamefont {Twitchen}, \citenamefont
  {Childress},\ and\ \citenamefont {Hanson}}]{Bernien13}%
  \BibitemOpen
  \bibfield  {author} {\bibinfo {author} {\bibfnamefont {H.}~\bibnamefont
  {Bernien}}, \bibinfo {author} {\bibfnamefont {B.}~\bibnamefont {Hensen}},
  \bibinfo {author} {\bibfnamefont {W.}~\bibnamefont {Pfaff}}, \bibinfo
  {author} {\bibfnamefont {G.}~\bibnamefont {Koolstra}}, \bibinfo {author}
  {\bibfnamefont {M.~S.}\ \bibnamefont {Blok}}, \bibinfo {author}
  {\bibfnamefont {L.}~\bibnamefont {Robledo}}, \bibinfo {author} {\bibfnamefont
  {T.~H.}\ \bibnamefont {Taminiau}}, \bibinfo {author} {\bibfnamefont
  {M.}~\bibnamefont {Markham}}, \bibinfo {author} {\bibfnamefont {D.~J.}\
  \bibnamefont {Twitchen}}, \bibinfo {author} {\bibfnamefont {L.}~\bibnamefont
  {Childress}}, \ and\ \bibinfo {author} {\bibfnamefont {R.}~\bibnamefont
  {Hanson}},\ }\bibfield  {title} {\enquote {\bibinfo {title} {Heralded
  entanglement between solid-state qubits separated by three metres},}\ }\href
  {https://doi.org/10.1038/nature12016} {\bibfield  {journal} {\bibinfo
  {journal} {Nature}\ }\textbf {\bibinfo {volume} {497}},\ \bibinfo {pages}
  {86} (\bibinfo {year} {2013})}\BibitemShut {NoStop}%
\bibitem [{\citenamefont {Acin}\ \emph {et~al.}(2006)\citenamefont {Acin},
  \citenamefont {Gisin},\ and\ \citenamefont {Masanes}}]{Acin06}%
  \BibitemOpen
  \bibfield  {author} {\bibinfo {author} {\bibfnamefont {A.}~\bibnamefont
  {Acin}}, \bibinfo {author} {\bibfnamefont {N.}~\bibnamefont {Gisin}}, \ and\
  \bibinfo {author} {\bibfnamefont {L.}~\bibnamefont {Masanes}},\ }\bibfield
  {title} {\enquote {\bibinfo {title} {From {Bell}'s theorem to secure quantum
  key distribution},}\ }\href {\doibase 10.1103/PhysRevLett.97.120405}
  {\bibfield  {journal} {\bibinfo  {journal} {Phys. Rev. Lett.}\ }\textbf
  {\bibinfo {volume} {97}},\ \bibinfo {pages} {120405} (\bibinfo {year}
  {2006})}\BibitemShut {NoStop}%
\bibitem [{\citenamefont {Scarani}(2012)}]{Scarani12}%
  \BibitemOpen
  \bibfield  {author} {\bibinfo {author} {\bibfnamefont {V.}~\bibnamefont
  {Scarani}},\ }\bibfield  {title} {\enquote {\bibinfo {title} {The
  device-independent outlook on quantum physics},}\ }\href {\doibase
  10.2478/v10155-012-0003-4} {\bibfield  {journal} {\bibinfo  {journal} {Acta
  Physica Slovaca}\ }\textbf {\bibinfo {volume} {62}},\ \bibinfo {pages} {347}
  (\bibinfo {year} {2012})}\BibitemShut {NoStop}%
\bibitem [{\citenamefont {Popescu}\ and\ \citenamefont
  {Rohrlich}(1992)}]{Popescu92}%
  \BibitemOpen
  \bibfield  {author} {\bibinfo {author} {\bibfnamefont {S.}~\bibnamefont
  {Popescu}}\ and\ \bibinfo {author} {\bibfnamefont {D.}~\bibnamefont
  {Rohrlich}},\ }\bibfield  {title} {\enquote {\bibinfo {title} {Which states
  violate {Bell}'s inequality maximally?}}\ }\href {\doibase
  10.1016/0375-9601(92)90819-8} {\bibfield  {journal} {\bibinfo  {journal}
  {Phys. Lett. A}\ }\textbf {\bibinfo {volume} {169}},\ \bibinfo {pages} {411}
  (\bibinfo {year} {1992})}\BibitemShut {NoStop}%
\bibitem [{\citenamefont {Braunstein}\ \emph {et~al.}(1992)\citenamefont
  {Braunstein}, \citenamefont {Mann},\ and\ \citenamefont
  {Revzen}}]{Braunstein92}%
  \BibitemOpen
  \bibfield  {author} {\bibinfo {author} {\bibfnamefont {S.~L.}\ \bibnamefont
  {Braunstein}}, \bibinfo {author} {\bibfnamefont {A.}~\bibnamefont {Mann}}, \
  and\ \bibinfo {author} {\bibfnamefont {M.}~\bibnamefont {Revzen}},\
  }\bibfield  {title} {\enquote {\bibinfo {title} {Maximal violation of {Bell}
  inequalities for mixed states},}\ }\href {\doibase
  10.1103/PhysRevLett.68.3259} {\bibfield  {journal} {\bibinfo  {journal}
  {Phys. Rev. Lett.}\ }\textbf {\bibinfo {volume} {68}},\ \bibinfo {pages}
  {3259} (\bibinfo {year} {1992})}\BibitemShut {NoStop}%
\bibitem [{\citenamefont {Clauser}\ \emph {et~al.}(1969)\citenamefont
  {Clauser}, \citenamefont {Horne}, \citenamefont {Shimony},\ and\
  \citenamefont {Holt}}]{CHSH69}%
  \BibitemOpen
  \bibfield  {author} {\bibinfo {author} {\bibfnamefont {J.~F.}\ \bibnamefont
  {Clauser}}, \bibinfo {author} {\bibfnamefont {M.~A.}\ \bibnamefont {Horne}},
  \bibinfo {author} {\bibfnamefont {A.}~\bibnamefont {Shimony}}, \ and\
  \bibinfo {author} {\bibfnamefont {R.~A.}\ \bibnamefont {Holt}},\ }\bibfield
  {title} {\enquote {\bibinfo {title} {Proposed experiment to test local
  hidden-variable theories},}\ }\href {\doibase 10.1103/PhysRevLett.23.880}
  {\bibfield  {journal} {\bibinfo  {journal} {Phys. Rev. Lett.}\ }\textbf
  {\bibinfo {volume} {23}},\ \bibinfo {pages} {880} (\bibinfo {year}
  {1969})}\BibitemShut {NoStop}%
\bibitem [{\citenamefont {Mayers}\ and\ \citenamefont {Yao}(1998)}]{Mayers98}%
  \BibitemOpen
  \bibfield  {author} {\bibinfo {author} {\bibfnamefont {D.}~\bibnamefont
  {Mayers}}\ and\ \bibinfo {author} {\bibfnamefont {A.}~\bibnamefont {Yao}},\
  }\bibfield  {title} {\enquote {\bibinfo {title} {Quantum cryptography with
  imperfect apparatus},}\ }\href
  {https://dl.acm.org/citation.cfm?id=795664.796390} {\bibfield  {journal}
  {\bibinfo  {journal} {Proceeding FOCS '98 Proceedings of the 39th Annual
  Symposium on Foundations of Computer Science}\ ,\ \bibinfo {pages} {503}}
  (\bibinfo {year} {1998})}\BibitemShut {NoStop}%
\bibitem [{\citenamefont {Mayers}\ and\ \citenamefont {Yao}(2004)}]{Mayers04}%
  \BibitemOpen
  \bibfield  {author} {\bibinfo {author} {\bibfnamefont {D.}~\bibnamefont
  {Mayers}}\ and\ \bibinfo {author} {\bibfnamefont {A.}~\bibnamefont {Yao}},\
  }\bibfield  {title} {\enquote {\bibinfo {title} {Self testing quantum
  apparatus},}\ }\href {http://portal.acm.org/citation.cfm?id=2011830}
  {\bibfield  {journal} {\bibinfo  {journal} {QIC}\ }\textbf {\bibinfo {volume}
  {4}},\ \bibinfo {pages} {273--286} (\bibinfo {year} {2004})}\BibitemShut
  {NoStop}%
\bibitem [{\citenamefont {Colbeck}(2009)}]{Colbeck09}%
  \BibitemOpen
  \bibfield  {author} {\bibinfo {author} {\bibfnamefont {R.}~\bibnamefont
  {Colbeck}},\ }\bibfield  {title} {\enquote {\bibinfo {title} {Quantum and
  relativistic protocols for secure multi-party computation},}\ }\href@noop {}
  {\  (\bibinfo {year} {2009})},\ \Eprint
  {http://arxiv.org/abs/0911.3814}{arXiv:0911.3814}\BibitemShut {NoStop}%
\bibitem [{\citenamefont {Pironio}\ \emph {et~al.}(2010)\citenamefont
  {Pironio}, \citenamefont {Acin}, \citenamefont {Massar}, \citenamefont {Boyer
  de~la Giroday}, \citenamefont {Matsukevich}, \citenamefont {Maunz},
  \citenamefont {Olmschenk}, \citenamefont {Hayes}, \citenamefont {Luo},
  \citenamefont {Manning},\ and\ \citenamefont {Monroe}}]{Pironio10}%
  \BibitemOpen
  \bibfield  {author} {\bibinfo {author} {\bibfnamefont {S.}~\bibnamefont
  {Pironio}}, \bibinfo {author} {\bibfnamefont {A.}~\bibnamefont {Acin}},
  \bibinfo {author} {\bibfnamefont {S.}~\bibnamefont {Massar}}, \bibinfo
  {author} {\bibfnamefont {A.}~\bibnamefont {Boyer de~la Giroday}}, \bibinfo
  {author} {\bibfnamefont {D.~N.}\ \bibnamefont {Matsukevich}}, \bibinfo
  {author} {\bibfnamefont {P.}~\bibnamefont {Maunz}}, \bibinfo {author}
  {\bibfnamefont {S.}~\bibnamefont {Olmschenk}}, \bibinfo {author}
  {\bibfnamefont {D.}~\bibnamefont {Hayes}}, \bibinfo {author} {\bibfnamefont
  {L.}~\bibnamefont {Luo}}, \bibinfo {author} {\bibfnamefont {T.~A.}\
  \bibnamefont {Manning}}, \ and\ \bibinfo {author} {\bibfnamefont
  {C.}~\bibnamefont {Monroe}},\ }\bibfield  {title} {\enquote {\bibinfo {title}
  {Random numbers certified by {Bell}'s theorem},}\ }\href {\doibase
  10.1038/nature09008} {\bibfield  {journal} {\bibinfo  {journal} {Nature}\
  }\textbf {\bibinfo {volume} {464}},\ \bibinfo {pages} {1021} (\bibinfo {year}
  {2010})}\BibitemShut {NoStop}%
\bibitem [{\citenamefont {Bamps}\ and\ \citenamefont
  {Pironio}(2015)}]{Bamps16}%
  \BibitemOpen
  \bibfield  {author} {\bibinfo {author} {\bibfnamefont {C.}~\bibnamefont
  {Bamps}}\ and\ \bibinfo {author} {\bibfnamefont {S.}~\bibnamefont
  {Pironio}},\ }\bibfield  {title} {\enquote {\bibinfo {title} {Sum-of-squares
  decompositions for a family of clauser-horne-shimony-holt-like inequalities
  and their application to self-testing},}\ }\href {\doibase
  10.1103/PhysRevA.91.052111} {\bibfield  {journal} {\bibinfo  {journal} {Phys.
  Rev. A}\ }\textbf {\bibinfo {volume} {91}},\ \bibinfo {pages} {052111}
  (\bibinfo {year} {2015})}\BibitemShut {NoStop}%
\bibitem [{\citenamefont {Reichard}\ \emph {et~al.}(2013)\citenamefont
  {Reichard}, \citenamefont {Unger},\ and\ \citenamefont
  {Vazirani}}]{Reichard13}%
  \BibitemOpen
  \bibfield  {author} {\bibinfo {author} {\bibfnamefont {B.~W.}\ \bibnamefont
  {Reichard}}, \bibinfo {author} {\bibfnamefont {F.}~\bibnamefont {Unger}}, \
  and\ \bibinfo {author} {\bibfnamefont {U.}~\bibnamefont {Vazirani}},\
  }\bibfield  {title} {\enquote {\bibinfo {title} {Classical command of quantum
  systems},}\ }\href {\doibase 10.1038/nature12035} {\bibfield  {journal}
  {\bibinfo  {journal} {Nature (London)}\ }\textbf {\bibinfo {volume} {496}},\
  \bibinfo {pages} {456} (\bibinfo {year} {2013})}\BibitemShut {NoStop}%
\bibitem [{\citenamefont {Šupić}\ and\ \citenamefont
  {Bowles}(2020)}]{Supic19}%
  \BibitemOpen
  \bibfield  {author} {\bibinfo {author} {\bibfnamefont {I.}~\bibnamefont
  {Šupić}}\ and\ \bibinfo {author} {\bibfnamefont {J.}~\bibnamefont
  {Bowles}},\ }\bibfield  {title} {\enquote {\bibinfo {title} {Self-testing of
  quantum systems:~a review},}\ }\href {\doibase 10.22331/q-2020-09-30-337}
  {\bibfield  {journal} {\bibinfo  {journal} {Quantum}\ }\textbf {\bibinfo
  {volume} {4}},\ \bibinfo {pages} {337} (\bibinfo {year} {2020})}\BibitemShut
  {NoStop}%
\bibitem [{\citenamefont {Tomamichel}\ and\ \citenamefont
  {Hänggi}(2013)}]{Tomamichel13}%
  \BibitemOpen
  \bibfield  {author} {\bibinfo {author} {\bibfnamefont {M.}~\bibnamefont
  {Tomamichel}}\ and\ \bibinfo {author} {\bibfnamefont {E.}~\bibnamefont
  {Hänggi}},\ }\bibfield  {title} {\enquote {\bibinfo {title} {The link
  between entropic uncertainty and nonlocality},}\ }\href {\doibase
  10.1088/1751-8113/46/5/055301} {\bibfield  {journal} {\bibinfo  {journal} {J.
  Phys. A: Math. Theor.}\ }\textbf {\bibinfo {volume} {46}},\ \bibinfo {pages}
  {055301} (\bibinfo {year} {2013})}\BibitemShut {NoStop}%
\bibitem [{\citenamefont {Bardyn}\ \emph {et~al.}(2009)\citenamefont {Bardyn},
  \citenamefont {Liew}, \citenamefont {Massar}, \citenamefont {McKague},\ and\
  \citenamefont {Scarani}}]{Bardyn09}%
  \BibitemOpen
  \bibfield  {author} {\bibinfo {author} {\bibfnamefont {C.-E.}\ \bibnamefont
  {Bardyn}}, \bibinfo {author} {\bibfnamefont {T.~C.~H.}\ \bibnamefont {Liew}},
  \bibinfo {author} {\bibfnamefont {S.}~\bibnamefont {Massar}}, \bibinfo
  {author} {\bibfnamefont {M.}~\bibnamefont {McKague}}, \ and\ \bibinfo
  {author} {\bibfnamefont {V.}~\bibnamefont {Scarani}},\ }\bibfield  {title}
  {\enquote {\bibinfo {title} {Device-independent state estimation based on
  {Bell}'s inequalities},}\ }\href {\doibase 10.1103/PhysRevA.80.062327}
  {\bibfield  {journal} {\bibinfo  {journal} {Phys. Rev. A}\ }\textbf {\bibinfo
  {volume} {80}},\ \bibinfo {pages} {062327} (\bibinfo {year}
  {2009})}\BibitemShut {NoStop}%
\bibitem [{\citenamefont {Miller}\ and\ \citenamefont {Shi}(2013)}]{Miller13}%
  \BibitemOpen
  \bibfield  {author} {\bibinfo {author} {\bibfnamefont {C.}~\bibnamefont
  {Miller}}\ and\ \bibinfo {author} {\bibfnamefont {Y.}~\bibnamefont {Shi}},\
  }\bibfield  {title} {\enquote {\bibinfo {title} {Optimal robust self-testing
  by binary nonlocal xor games},}\ }\href {\doibase
  10.4230/LIPIcs.TQC.2013.254} {\bibfield  {journal} {\bibinfo  {journal} {8th
  Conference on the Theory of Quantum Computation, Communication and
  Cryptography, TQC 2013}\ }\textbf {\bibinfo {volume} {22}},\ \bibinfo {pages}
  {254} (\bibinfo {year} {2013})}\BibitemShut {NoStop}%
\bibitem [{\citenamefont {Yang}\ \emph {et~al.}(2014)\citenamefont {Yang},
  \citenamefont {Vértesi}, \citenamefont {Bancal}, \citenamefont {Scarani},\
  and\ \citenamefont {Navascués}}]{Yang14}%
  \BibitemOpen
  \bibfield  {author} {\bibinfo {author} {\bibfnamefont {T.~H.}\ \bibnamefont
  {Yang}}, \bibinfo {author} {\bibfnamefont {T.}~\bibnamefont {Vértesi}},
  \bibinfo {author} {\bibfnamefont {J.-D.}\ \bibnamefont {Bancal}}, \bibinfo
  {author} {\bibfnamefont {V.}~\bibnamefont {Scarani}}, \ and\ \bibinfo
  {author} {\bibfnamefont {M.}~\bibnamefont {Navascués}},\ }\bibfield  {title}
  {\enquote {\bibinfo {title} {Robust and versatile black-box certification of
  quantum devices},}\ }\href {\doibase 10.1103/PhysRevLett.113.040401}
  {\bibfield  {journal} {\bibinfo  {journal} {Phys. Rev. Lett.}\ }\textbf
  {\bibinfo {volume} {113}},\ \bibinfo {pages} {040401} (\bibinfo {year}
  {2014})}\BibitemShut {NoStop}%
\bibitem [{\citenamefont {Kaniewski}(2016)}]{Kaniewski16}%
  \BibitemOpen
  \bibfield  {author} {\bibinfo {author} {\bibfnamefont {J.}~\bibnamefont
  {Kaniewski}},\ }\bibfield  {title} {\enquote {\bibinfo {title} {Analytic and
  nearly optimal self-testing bounds for the clauser-horne-shimony-holt and
  mermin inequalities},}\ }\href {\doibase 10.1103/PhysRevLett.117.070402}
  {\bibfield  {journal} {\bibinfo  {journal} {Phys. Rev. Lett.}\ }\textbf
  {\bibinfo {volume} {117}},\ \bibinfo {pages} {070402} (\bibinfo {year}
  {2016})}\BibitemShut {NoStop}%
\bibitem [{\citenamefont {McKague}(2016{\natexlab{a}})}]{McKague13}%
  \BibitemOpen
  \bibfield  {author} {\bibinfo {author} {\bibfnamefont {M.}~\bibnamefont
  {McKague}},\ }\bibfield  {title} {\enquote {\bibinfo {title} {Interactive
  proofs for bqp via self-tested graph states},}\ }\href {\doibase
  10.4086/toc.2016.v012a003} {\bibfield  {journal} {\bibinfo  {journal} {Theory
  of Computing}\ }\textbf {\bibinfo {volume} {12}},\ \bibinfo {pages} {1}
  (\bibinfo {year} {2016}{\natexlab{a}})}\BibitemShut {NoStop}%
\bibitem [{\citenamefont {Natarajan}\ and\ \citenamefont
  {Vidick}(2017)}]{Natarajan17}%
  \BibitemOpen
  \bibfield  {author} {\bibinfo {author} {\bibfnamefont {A.}~\bibnamefont
  {Natarajan}}\ and\ \bibinfo {author} {\bibfnamefont {T.}~\bibnamefont
  {Vidick}},\ }\bibfield  {title} {\enquote {\bibinfo {title} {A qantum
  linearity test for robustly verifying entanglement},}\ }\href {\doibase
  10.1145/3055399.3055468} {\bibfield  {journal} {\bibinfo  {journal} {Proc. of
  STOC '17}\ ,\ \bibinfo {pages} {1003--1015}} (\bibinfo {year}
  {2017})}\BibitemShut {NoStop}%
\bibitem [{\citenamefont {Coladangelo}\ \emph {et~al.}(2017)\citenamefont
  {Coladangelo}, \citenamefont {Goh},\ and\ \citenamefont
  {Scarani}}]{Coladangelo17}%
  \BibitemOpen
  \bibfield  {author} {\bibinfo {author} {\bibfnamefont {A.}~\bibnamefont
  {Coladangelo}}, \bibinfo {author} {\bibfnamefont {K.~T.}\ \bibnamefont
  {Goh}}, \ and\ \bibinfo {author} {\bibfnamefont {V.}~\bibnamefont
  {Scarani}},\ }\bibfield  {title} {\enquote {\bibinfo {title} {All pure
  bipartite entangled states can be self-tested},}\ }\href {\doibase
  10.1038/ncomms15485} {\bibfield  {journal} {\bibinfo  {journal} {Nature
  Communications}\ }\textbf {\bibinfo {volume} {8}},\ \bibinfo {pages} {15485}
  (\bibinfo {year} {2017})}\BibitemShut {NoStop}%
\bibitem [{\citenamefont {Šupić}\ \emph {et~al.}(2018)\citenamefont
  {Šupić}, \citenamefont {Coladangelo}, \citenamefont {Augusiak},\ and\
  \citenamefont {Acín}}]{Supic17}%
  \BibitemOpen
  \bibfield  {author} {\bibinfo {author} {\bibfnamefont {I.}~\bibnamefont
  {Šupić}}, \bibinfo {author} {\bibfnamefont {A.}~\bibnamefont
  {Coladangelo}}, \bibinfo {author} {\bibfnamefont {R.}~\bibnamefont
  {Augusiak}}, \ and\ \bibinfo {author} {\bibfnamefont {A.}~\bibnamefont
  {Acín}},\ }\bibfield  {title} {\enquote {\bibinfo {title} {Self-testing
  multipartite entangled states through projections onto two systems},}\ }\href
  {\doibase 10.1088/1367-2630/aad89b} {\bibfield  {journal} {\bibinfo
  {journal} {New Journal of Physics}\ }\textbf {\bibinfo {volume} {20}},\
  \bibinfo {pages} {083041} (\bibinfo {year} {2018})}\BibitemShut {NoStop}%
\bibitem [{\citenamefont {Magniez}\ \emph {et~al.}(2006)\citenamefont
  {Magniez}, \citenamefont {Mayers}, \citenamefont {Mosca},\ and\ \citenamefont
  {Ollivier}}]{Magniez06}%
  \BibitemOpen
  \bibfield  {author} {\bibinfo {author} {\bibfnamefont {F.}~\bibnamefont
  {Magniez}}, \bibinfo {author} {\bibfnamefont {D.}~\bibnamefont {Mayers}},
  \bibinfo {author} {\bibfnamefont {M.}~\bibnamefont {Mosca}}, \ and\ \bibinfo
  {author} {\bibfnamefont {H.}~\bibnamefont {Ollivier}},\ }\bibfield  {title}
  {\enquote {\bibinfo {title} {Self-testing of quantum circuits},}\ }\href
  {\doibase 10.1007/11786986_8} {\bibfield  {journal} {\bibinfo  {journal}
  {Proceedings of the 33rd International Colloquium on Automata, Lang Lecture
  Notes in Computer Science No. 4052, edited by Bugliesi}\ ,\ \bibinfo {pages}
  {72}} (\bibinfo {year} {2006})}\BibitemShut {NoStop}%
\bibitem [{\citenamefont {Bancal}\ \emph {et~al.}(2015)\citenamefont {Bancal},
  \citenamefont {Navascues}, \citenamefont {Scarani}, \citenamefont {Vertesi},\
  and\ \citenamefont {Yang}}]{Bancal15}%
  \BibitemOpen
  \bibfield  {author} {\bibinfo {author} {\bibfnamefont {J.-D.}\ \bibnamefont
  {Bancal}}, \bibinfo {author} {\bibfnamefont {M.}~\bibnamefont {Navascues}},
  \bibinfo {author} {\bibfnamefont {V.}~\bibnamefont {Scarani}}, \bibinfo
  {author} {\bibfnamefont {T.}~\bibnamefont {Vertesi}}, \ and\ \bibinfo
  {author} {\bibfnamefont {T.~H.}\ \bibnamefont {Yang}},\ }\bibfield  {title}
  {\enquote {\bibinfo {title} {Physical characterization of quantum devices
  from nonlocal correlations},}\ }\href {\doibase 10.1103/PhysRevA.91.022115}
  {\bibfield  {journal} {\bibinfo  {journal} {Phys. Rev. A}\ }\textbf {\bibinfo
  {volume} {91}},\ \bibinfo {pages} {022115} (\bibinfo {year}
  {2015})}\BibitemShut {NoStop}%
\bibitem [{\citenamefont {Dall'Arno}\ \emph {et~al.}(2017)\citenamefont
  {Dall'Arno}, \citenamefont {Brandsen},\ and\ \citenamefont
  {Buscemi}}]{Dall'Arno17}%
  \BibitemOpen
  \bibfield  {author} {\bibinfo {author} {\bibfnamefont {M.}~\bibnamefont
  {Dall'Arno}}, \bibinfo {author} {\bibfnamefont {S.}~\bibnamefont {Brandsen}},
  \ and\ \bibinfo {author} {\bibfnamefont {F.}~\bibnamefont {Buscemi}},\
  }\bibfield  {title} {\enquote {\bibinfo {title} {Device-independent tests of
  quantum channels},}\ }\href {\doibase 10.1098/rspa.2016.0721} {\bibfield
  {journal} {\bibinfo  {journal} {Proc. R. Soc. A}\ }\textbf {\bibinfo {volume}
  {473}},\ \bibinfo {pages} {20160721} (\bibinfo {year} {2017})}\BibitemShut
  {NoStop}%
\bibitem [{\citenamefont {Kaniewski}(2017)}]{Kaniewski17}%
  \BibitemOpen
  \bibfield  {author} {\bibinfo {author} {\bibfnamefont {J.}~\bibnamefont
  {Kaniewski}},\ }\bibfield  {title} {\enquote {\bibinfo {title} {Self-testing
  of binary observables based on commutation},}\ }\href {\doibase
  10.1103/PhysRevA.95.062323} {\bibfield  {journal} {\bibinfo  {journal} {Phys.
  Rev. A}\ }\textbf {\bibinfo {volume} {95}},\ \bibinfo {pages} {062323}
  (\bibinfo {year} {2017})}\BibitemShut {NoStop}%
\bibitem [{\citenamefont {Sekatski}\ \emph {et~al.}(2018)\citenamefont
  {Sekatski}, \citenamefont {Bancal}, \citenamefont {Wagner},\ and\
  \citenamefont {Sangouard}}]{Sekatski18}%
  \BibitemOpen
  \bibfield  {author} {\bibinfo {author} {\bibfnamefont {P.}~\bibnamefont
  {Sekatski}}, \bibinfo {author} {\bibfnamefont {J.-D.}\ \bibnamefont
  {Bancal}}, \bibinfo {author} {\bibfnamefont {S.}~\bibnamefont {Wagner}}, \
  and\ \bibinfo {author} {\bibfnamefont {N.}~\bibnamefont {Sangouard}},\
  }\bibfield  {title} {\enquote {\bibinfo {title} {Certifying the building
  blocks of quantum computers from {Bell}'s theorem},}\ }\href {\doibase
  10.1103/PhysRevLett.121.180505} {\bibfield  {journal} {\bibinfo  {journal}
  {Phys. Rev. Lett.}\ }\textbf {\bibinfo {volume} {121}},\ \bibinfo {pages}
  {180505} (\bibinfo {year} {2018})}\BibitemShut {NoStop}%
\bibitem [{\citenamefont {Wagner}\ \emph {et~al.}(2020)\citenamefont {Wagner},
  \citenamefont {Bancal}, \citenamefont {Sangouard},\ and\ \citenamefont
  {Sekatski}}]{Wagner19}%
  \BibitemOpen
  \bibfield  {author} {\bibinfo {author} {\bibfnamefont {S.}~\bibnamefont
  {Wagner}}, \bibinfo {author} {\bibfnamefont {J.-D.}\ \bibnamefont {Bancal}},
  \bibinfo {author} {\bibfnamefont {N.}~\bibnamefont {Sangouard}}, \ and\
  \bibinfo {author} {\bibfnamefont {P.}~\bibnamefont {Sekatski}},\ }\bibfield
  {title} {\enquote {\bibinfo {title} {Device-independent characterization of
  quantum instruments},}\ }\href {\doibase 10.22331/q-2020-03-19-243}
  {\bibfield  {journal} {\bibinfo  {journal} {Quantum}\ }\textbf {\bibinfo
  {volume} {4}},\ \bibinfo {pages} {243} (\bibinfo {year} {2020})}\BibitemShut
  {NoStop}%
\bibitem [{\citenamefont {Tan}\ \emph {et~al.}(2017)\citenamefont {Tan},
  \citenamefont {Wan}, \citenamefont {Erickson}, \citenamefont {Bierhorst},
  \citenamefont {Kienzler}, \citenamefont {Glancy}, \citenamefont {Knill},
  \citenamefont {Leibfried},\ and\ \citenamefont {Wineland}}]{Tan17}%
  \BibitemOpen
  \bibfield  {author} {\bibinfo {author} {\bibfnamefont {T.~R.}\ \bibnamefont
  {Tan}}, \bibinfo {author} {\bibfnamefont {Y.}~\bibnamefont {Wan}}, \bibinfo
  {author} {\bibfnamefont {S.}~\bibnamefont {Erickson}}, \bibinfo {author}
  {\bibfnamefont {P.}~\bibnamefont {Bierhorst}}, \bibinfo {author}
  {\bibfnamefont {D.}~\bibnamefont {Kienzler}}, \bibinfo {author}
  {\bibfnamefont {S.}~\bibnamefont {Glancy}}, \bibinfo {author} {\bibfnamefont
  {E.}~\bibnamefont {Knill}}, \bibinfo {author} {\bibfnamefont
  {D.}~\bibnamefont {Leibfried}}, \ and\ \bibinfo {author} {\bibfnamefont
  {D.~J.}\ \bibnamefont {Wineland}},\ }\bibfield  {title} {\enquote {\bibinfo
  {title} {Chained {Bell} inequality experiment with high-efficiency
  measurements},}\ }\href {\doibase 10.1103/PhysRevLett.118.130403} {\bibfield
  {journal} {\bibinfo  {journal} {Phys. Rev. Lett.}\ }\textbf {\bibinfo
  {volume} {118}},\ \bibinfo {pages} {130403} (\bibinfo {year}
  {2017})}\BibitemShut {NoStop}%
\bibitem [{\citenamefont {Valcarce}\ \emph {et~al.}(2020)\citenamefont
  {Valcarce}, \citenamefont {Sekatski}, \citenamefont {Orsucci}, \citenamefont
  {Oudot}, \citenamefont {Bancal},\ and\ \citenamefont
  {Sangouard}}]{Valcarce19}%
  \BibitemOpen
  \bibfield  {author} {\bibinfo {author} {\bibfnamefont {X.}~\bibnamefont
  {Valcarce}}, \bibinfo {author} {\bibfnamefont {P.}~\bibnamefont {Sekatski}},
  \bibinfo {author} {\bibfnamefont {D.}~\bibnamefont {Orsucci}}, \bibinfo
  {author} {\bibfnamefont {E.}~\bibnamefont {Oudot}}, \bibinfo {author}
  {\bibfnamefont {J.-D.}\ \bibnamefont {Bancal}}, \ and\ \bibinfo {author}
  {\bibfnamefont {N.}~\bibnamefont {Sangouard}},\ }\bibfield  {title} {\enquote
  {\bibinfo {title} {What is the minimum {CHSH} score certifying that a state
  resembles the singlet?}}\ }\href {\doibase 10.22331/q-2020-03-23-246}
  {\bibfield  {journal} {\bibinfo  {journal} {Quantum}\ }\textbf {\bibinfo
  {volume} {4}},\ \bibinfo {pages} {246} (\bibinfo {year} {2020})}\BibitemShut
  {NoStop}%
\bibitem [{\citenamefont {Hensen}\ \emph {et~al.}(2015)\citenamefont {Hensen},
  \citenamefont {Bernien}, \citenamefont {Dréau}, \citenamefont {Reiserer},
  \citenamefont {Kalb}, \citenamefont {Blok}, \citenamefont {Ruitenberg},
  \citenamefont {Vermeulen}, \citenamefont {Schouten}, \citenamefont
  {Abellán}, \citenamefont {Amaya}, \citenamefont {Pruneri}, \citenamefont
  {Mitchell}, \citenamefont {Markham}, \citenamefont {Twitchen}, \citenamefont
  {Elkouss}, \citenamefont {Wehner}, \citenamefont {Taminiau},\ and\
  \citenamefont {Hanson}}]{Hensen15}%
  \BibitemOpen
  \bibfield  {author} {\bibinfo {author} {\bibfnamefont {B.}~\bibnamefont
  {Hensen}}, \bibinfo {author} {\bibfnamefont {H.}~\bibnamefont {Bernien}},
  \bibinfo {author} {\bibfnamefont {A.~E.}\ \bibnamefont {Dréau}}, \bibinfo
  {author} {\bibfnamefont {A.}~\bibnamefont {Reiserer}}, \bibinfo {author}
  {\bibfnamefont {N.}~\bibnamefont {Kalb}}, \bibinfo {author} {\bibfnamefont
  {M.~S.}\ \bibnamefont {Blok}}, \bibinfo {author} {\bibfnamefont
  {J.}~\bibnamefont {Ruitenberg}}, \bibinfo {author} {\bibfnamefont {R.~F.~L.}\
  \bibnamefont {Vermeulen}}, \bibinfo {author} {\bibfnamefont {R.~N.}\
  \bibnamefont {Schouten}}, \bibinfo {author} {\bibfnamefont {C.}~\bibnamefont
  {Abellán}}, \bibinfo {author} {\bibfnamefont {W.}~\bibnamefont {Amaya}},
  \bibinfo {author} {\bibfnamefont {V.}~\bibnamefont {Pruneri}}, \bibinfo
  {author} {\bibfnamefont {M.~W.}\ \bibnamefont {Mitchell}}, \bibinfo {author}
  {\bibfnamefont {M.}~\bibnamefont {Markham}}, \bibinfo {author} {\bibfnamefont
  {D.~J.}\ \bibnamefont {Twitchen}}, \bibinfo {author} {\bibfnamefont
  {D.}~\bibnamefont {Elkouss}}, \bibinfo {author} {\bibfnamefont
  {S.}~\bibnamefont {Wehner}}, \bibinfo {author} {\bibfnamefont {T.~H.}\
  \bibnamefont {Taminiau}}, \ and\ \bibinfo {author} {\bibfnamefont
  {R.}~\bibnamefont {Hanson}},\ }\bibfield  {title} {\enquote {\bibinfo {title}
  {Loophole-free {Bell} inequality violation using electron spins separated by
  1.3 kilometres},}\ }\href {\doibase 10.1038/nature15759} {\bibfield
  {journal} {\bibinfo  {journal} {Nature (London)}\ }\textbf {\bibinfo {volume}
  {526}},\ \bibinfo {pages} {682} (\bibinfo {year} {2015})}\BibitemShut
  {NoStop}%
\bibitem [{\citenamefont {Shalm}\ \emph {et~al.}(2015)\citenamefont {Shalm},
  \citenamefont {Meyer-Scott}, \citenamefont {Christensen}, \citenamefont
  {Bierhorst}, \citenamefont {Wayne}, \citenamefont {Stevens}, \citenamefont
  {Gerrits}, \citenamefont {Glancy}, \citenamefont {Hamel}, \citenamefont
  {Allman}, \citenamefont {Coakley}, \citenamefont {Dyer}, \citenamefont
  {Hodge}, \citenamefont {Lita}, \citenamefont {Verma}, \citenamefont
  {Lambrocco}, \citenamefont {Tortorici}, \citenamefont {Migdall},
  \citenamefont {Zhang}, \citenamefont {Kumor}, \citenamefont {Farr},
  \citenamefont {Marsili}, \citenamefont {Shaw}, \citenamefont {Stern},
  \citenamefont {Abellán}, \citenamefont {Amaya}, \citenamefont {Pruneri},
  \citenamefont {Jennewein}, \citenamefont {Mitchell}, \citenamefont {Kwiat},
  \citenamefont {Bienfang}, \citenamefont {Mirin}, \citenamefont {Knill},\ and\
  \citenamefont {Nam}}]{Shalm15}%
  \BibitemOpen
  \bibfield  {author} {\bibinfo {author} {\bibfnamefont {L.~K.}\ \bibnamefont
  {Shalm}}, \bibinfo {author} {\bibfnamefont {E.}~\bibnamefont {Meyer-Scott}},
  \bibinfo {author} {\bibfnamefont {B.~G.}\ \bibnamefont {Christensen}},
  \bibinfo {author} {\bibfnamefont {P.}~\bibnamefont {Bierhorst}}, \bibinfo
  {author} {\bibfnamefont {M.~A.}\ \bibnamefont {Wayne}}, \bibinfo {author}
  {\bibfnamefont {M.~J.}\ \bibnamefont {Stevens}}, \bibinfo {author}
  {\bibfnamefont {T.}~\bibnamefont {Gerrits}}, \bibinfo {author} {\bibfnamefont
  {S.}~\bibnamefont {Glancy}}, \bibinfo {author} {\bibfnamefont {D.~R.}\
  \bibnamefont {Hamel}}, \bibinfo {author} {\bibfnamefont {M.~S.}\ \bibnamefont
  {Allman}}, \bibinfo {author} {\bibfnamefont {K.~J.}\ \bibnamefont {Coakley}},
  \bibinfo {author} {\bibfnamefont {S.~D.}\ \bibnamefont {Dyer}}, \bibinfo
  {author} {\bibfnamefont {C.}~\bibnamefont {Hodge}}, \bibinfo {author}
  {\bibfnamefont {A.~E.}\ \bibnamefont {Lita}}, \bibinfo {author}
  {\bibfnamefont {V.~B.}\ \bibnamefont {Verma}}, \bibinfo {author}
  {\bibfnamefont {C.}~\bibnamefont {Lambrocco}}, \bibinfo {author}
  {\bibfnamefont {E.}~\bibnamefont {Tortorici}}, \bibinfo {author}
  {\bibfnamefont {A.~L.}\ \bibnamefont {Migdall}}, \bibinfo {author}
  {\bibfnamefont {Y.}~\bibnamefont {Zhang}}, \bibinfo {author} {\bibfnamefont
  {D.~R.}\ \bibnamefont {Kumor}}, \bibinfo {author} {\bibfnamefont {W.~H.}\
  \bibnamefont {Farr}}, \bibinfo {author} {\bibfnamefont {F.}~\bibnamefont
  {Marsili}}, \bibinfo {author} {\bibfnamefont {M.~D.}\ \bibnamefont {Shaw}},
  \bibinfo {author} {\bibfnamefont {J.~A.}\ \bibnamefont {Stern}}, \bibinfo
  {author} {\bibfnamefont {C.}~\bibnamefont {Abellán}}, \bibinfo {author}
  {\bibfnamefont {W.}~\bibnamefont {Amaya}}, \bibinfo {author} {\bibfnamefont
  {V.}~\bibnamefont {Pruneri}}, \bibinfo {author} {\bibfnamefont
  {T.}~\bibnamefont {Jennewein}}, \bibinfo {author} {\bibfnamefont {M.~W.}\
  \bibnamefont {Mitchell}}, \bibinfo {author} {\bibfnamefont {P.~G.}\
  \bibnamefont {Kwiat}}, \bibinfo {author} {\bibfnamefont {J.~C.}\ \bibnamefont
  {Bienfang}}, \bibinfo {author} {\bibfnamefont {R.~P.}\ \bibnamefont {Mirin}},
  \bibinfo {author} {\bibfnamefont {E.}~\bibnamefont {Knill}}, \ and\ \bibinfo
  {author} {\bibfnamefont {S.~W.}\ \bibnamefont {Nam}},\ }\bibfield  {title}
  {\enquote {\bibinfo {title} {Strong loophole-free test of local realism},}\
  }\href {\doibase 10.1103/PhysRevLett.115.250402} {\bibfield  {journal}
  {\bibinfo  {journal} {Phys. Rev. Lett.}\ }\textbf {\bibinfo {volume} {115}},\
  \bibinfo {pages} {250402} (\bibinfo {year} {2015})}\BibitemShut {NoStop}%
\bibitem [{\citenamefont {Giustina}\ \emph {et~al.}(2015)\citenamefont
  {Giustina}, \citenamefont {Versteegh}, \citenamefont {Wengerowsky},
  \citenamefont {Handsteiner}, \citenamefont {Hochrainer}, \citenamefont
  {Phelan}, \citenamefont {Steinlechner}, \citenamefont {Kofler}, \citenamefont
  {Larsson}, \citenamefont {Abell\'an}, \citenamefont {Amaya}, \citenamefont
  {Pruneri}, \citenamefont {Mitchell}, \citenamefont {Beyer}, \citenamefont
  {Gerrits}, \citenamefont {Lita}, \citenamefont {Shalm}, \citenamefont {Nam},
  \citenamefont {Scheidl}, \citenamefont {Ursin}, \citenamefont {Wittmann},\
  and\ \citenamefont {Zeilinger}}]{Giustina15}%
  \BibitemOpen
  \bibfield  {author} {\bibinfo {author} {\bibfnamefont {M.}~\bibnamefont
  {Giustina}}, \bibinfo {author} {\bibfnamefont {M.~A.~M.}\ \bibnamefont
  {Versteegh}}, \bibinfo {author} {\bibfnamefont {S.}~\bibnamefont
  {Wengerowsky}}, \bibinfo {author} {\bibfnamefont {J.}~\bibnamefont
  {Handsteiner}}, \bibinfo {author} {\bibfnamefont {A.}~\bibnamefont
  {Hochrainer}}, \bibinfo {author} {\bibfnamefont {K.}~\bibnamefont {Phelan}},
  \bibinfo {author} {\bibfnamefont {F.}~\bibnamefont {Steinlechner}}, \bibinfo
  {author} {\bibfnamefont {J.}~\bibnamefont {Kofler}}, \bibinfo {author}
  {\bibfnamefont {J.-c.~A.}\ \bibnamefont {Larsson}}, \bibinfo {author}
  {\bibfnamefont {C.}~\bibnamefont {Abell\'an}}, \bibinfo {author}
  {\bibfnamefont {W.}~\bibnamefont {Amaya}}, \bibinfo {author} {\bibfnamefont
  {V.}~\bibnamefont {Pruneri}}, \bibinfo {author} {\bibfnamefont {M.~W.}\
  \bibnamefont {Mitchell}}, \bibinfo {author} {\bibfnamefont {J.}~\bibnamefont
  {Beyer}}, \bibinfo {author} {\bibfnamefont {T.}~\bibnamefont {Gerrits}},
  \bibinfo {author} {\bibfnamefont {A.~E.}\ \bibnamefont {Lita}}, \bibinfo
  {author} {\bibfnamefont {L.~K.}\ \bibnamefont {Shalm}}, \bibinfo {author}
  {\bibfnamefont {S.~W.}\ \bibnamefont {Nam}}, \bibinfo {author} {\bibfnamefont
  {T.}~\bibnamefont {Scheidl}}, \bibinfo {author} {\bibfnamefont
  {R.}~\bibnamefont {Ursin}}, \bibinfo {author} {\bibfnamefont
  {B.}~\bibnamefont {Wittmann}}, \ and\ \bibinfo {author} {\bibfnamefont
  {A.}~\bibnamefont {Zeilinger}},\ }\bibfield  {title} {\enquote {\bibinfo
  {title} {Significant-loophole-free test of {Bell}'s theorem with entangled
  photons},}\ }\href {\doibase 10.1103/PhysRevLett.115.250401} {\bibfield
  {journal} {\bibinfo  {journal} {Phys. Rev. Lett.}\ }\textbf {\bibinfo
  {volume} {115}},\ \bibinfo {pages} {250401} (\bibinfo {year}
  {2015})}\BibitemShut {NoStop}%
\bibitem [{\citenamefont {Rosenfeld}\ \emph {et~al.}(2017)\citenamefont
  {Rosenfeld}, \citenamefont {Burchardt}, \citenamefont {Garthoff},
  \citenamefont {Redeker}, \citenamefont {Ortegel}, \citenamefont {Rau},\ and\
  \citenamefont {Weinfurter}}]{Rosenfeld17}%
  \BibitemOpen
  \bibfield  {author} {\bibinfo {author} {\bibfnamefont {W.}~\bibnamefont
  {Rosenfeld}}, \bibinfo {author} {\bibfnamefont {D.}~\bibnamefont
  {Burchardt}}, \bibinfo {author} {\bibfnamefont {R.}~\bibnamefont {Garthoff}},
  \bibinfo {author} {\bibfnamefont {K.}~\bibnamefont {Redeker}}, \bibinfo
  {author} {\bibfnamefont {N.}~\bibnamefont {Ortegel}}, \bibinfo {author}
  {\bibfnamefont {M.}~\bibnamefont {Rau}}, \ and\ \bibinfo {author}
  {\bibfnamefont {H.}~\bibnamefont {Weinfurter}},\ }\bibfield  {title}
  {\enquote {\bibinfo {title} {Event-ready bell test using entangled atoms
  simultaneously closing detection and locality loopholes},}\ }\href {\doibase
  10.1103/PhysRevLett.119.010402} {\bibfield  {journal} {\bibinfo  {journal}
  {Phys. Rev. Lett.}\ }\textbf {\bibinfo {volume} {119}},\ \bibinfo {pages}
  {010402} (\bibinfo {year} {2017})}\BibitemShut {NoStop}%
\bibitem [{\citenamefont {Bell}(1964)}]{Bell64}%
  \BibitemOpen
  \bibfield  {author} {\bibinfo {author} {\bibfnamefont {J.~S.}\ \bibnamefont
  {Bell}},\ }\bibfield  {title} {\enquote {\bibinfo {title} {On the einstein
  podolsky rosen paradox},}\ }\href
  {https://journals.aps.org/ppf/pdf/10.1103/PhysicsPhysiqueFizika.1.195}
  {\bibfield  {journal} {\bibinfo  {journal} {Physics}\ }\textbf {\bibinfo
  {volume} {1}},\ \bibinfo {pages} {195} (\bibinfo {year} {1964})}\BibitemShut
  {NoStop}%
\bibitem [{\citenamefont {F\"{u}rst}\ \emph {et~al.}(2010)\citenamefont
  {F\"{u}rst}, \citenamefont {Weier}, \citenamefont {Nauerth}, \citenamefont
  {Marangon}, \citenamefont {Kurtsiefer},\ and\ \citenamefont
  {Weinfurter}}]{Fuerst2010}%
  \BibitemOpen
  \bibfield  {author} {\bibinfo {author} {\bibfnamefont {M.}~\bibnamefont
  {F\"{u}rst}}, \bibinfo {author} {\bibfnamefont {H.}~\bibnamefont {Weier}},
  \bibinfo {author} {\bibfnamefont {S.}~\bibnamefont {Nauerth}}, \bibinfo
  {author} {\bibfnamefont {D.~G.}\ \bibnamefont {Marangon}}, \bibinfo {author}
  {\bibfnamefont {C.}~\bibnamefont {Kurtsiefer}}, \ and\ \bibinfo {author}
  {\bibfnamefont {H.}~\bibnamefont {Weinfurter}},\ }\bibfield  {title}
  {\enquote {\bibinfo {title} {High speed optical quantum random number
  generation},}\ }\href {\doibase 10.1364/OE.18.013029} {\bibfield  {journal}
  {\bibinfo  {journal} {Opt. Express}\ }\textbf {\bibinfo {volume} {18}},\
  \bibinfo {pages} {13029--13037} (\bibinfo {year} {2010})}\BibitemShut
  {NoStop}%
\bibitem [{\citenamefont {Rosset}\ \emph {et~al.}(2014)\citenamefont {Rosset},
  \citenamefont {Bancal},\ and\ \citenamefont {Gisin}}]{Rosset14}%
  \BibitemOpen
  \bibfield  {author} {\bibinfo {author} {\bibfnamefont {D.}~\bibnamefont
  {Rosset}}, \bibinfo {author} {\bibfnamefont {J.-D.}\ \bibnamefont {Bancal}},
  \ and\ \bibinfo {author} {\bibfnamefont {N.}~\bibnamefont {Gisin}},\
  }\bibfield  {title} {\enquote {\bibinfo {title} {Classifying 50 years of
  {Bell} inequalities},}\ }\href {\doibase 10.1088/1751-8113/47/42/424022}
  {\bibfield  {journal} {\bibinfo  {journal} {J. Phys. A: Math Theor.}\
  }\textbf {\bibinfo {volume} {47}},\ \bibinfo {pages} {424022} (\bibinfo
  {year} {2014})}\BibitemShut {NoStop}%
\bibitem [{\citenamefont {Jenga}(2006)}]{Jenga06}%
  \BibitemOpen
  \bibfield  {author} {\bibinfo {author} {\bibfnamefont {M.}~\bibnamefont
  {Jenga}},\ }\bibfield  {title} {\enquote {\bibinfo {title} {A selected
  history of expectation bias in physics},}\ }\href {\doibase
  10.1119/1.2186333} {\bibfield  {journal} {\bibinfo  {journal} {American
  Journal of Physics}\ }\textbf {\bibinfo {volume} {74}},\ \bibinfo {pages}
  {578} (\bibinfo {year} {2006})}\BibitemShut {NoStop}%
\bibitem [{Hoe(1963)}]{Hoeffding63}%
  \BibitemOpen
  \bibfield  {title} {\enquote {\bibinfo {title} {Probability inequalities for
  sums of bounded random variables},}\ }\href
  {http://www.jstor.org/stable/2282952} {\bibfield  {journal} {\bibinfo
  {journal} {Journal of the American Statistical Association}\ }\textbf
  {\bibinfo {volume} {58}},\ \bibinfo {pages} {13} (\bibinfo {year}
  {1963})}\BibitemShut {NoStop}%
\bibitem [{\citenamefont {Hensen}\ \emph {et~al.}(2016)\citenamefont {Hensen},
  \citenamefont {Kalb}, \citenamefont {Blok}, \citenamefont {Dréau},
  \citenamefont {Reiserer}, \citenamefont {Vermeulen}, \citenamefont
  {Schouten}, \citenamefont {Markham}, \citenamefont {Twitchen}, \citenamefont
  {Goodenough}, \citenamefont {Elkouss}, \citenamefont {Wehner}, \citenamefont
  {Taminiau},\ and\ \citenamefont {Hanson}}]{Hensen16}%
  \BibitemOpen
  \bibfield  {author} {\bibinfo {author} {\bibfnamefont {B.}~\bibnamefont
  {Hensen}}, \bibinfo {author} {\bibfnamefont {N.}~\bibnamefont {Kalb}},
  \bibinfo {author} {\bibfnamefont {M.~S.}\ \bibnamefont {Blok}}, \bibinfo
  {author} {\bibfnamefont {A.~E.}\ \bibnamefont {Dréau}}, \bibinfo {author}
  {\bibfnamefont {A.}~\bibnamefont {Reiserer}}, \bibinfo {author}
  {\bibfnamefont {R.~F.~L.}\ \bibnamefont {Vermeulen}}, \bibinfo {author}
  {\bibfnamefont {R.~N.}\ \bibnamefont {Schouten}}, \bibinfo {author}
  {\bibfnamefont {M.}~\bibnamefont {Markham}}, \bibinfo {author} {\bibfnamefont
  {D.~J.}\ \bibnamefont {Twitchen}}, \bibinfo {author} {\bibfnamefont
  {K.}~\bibnamefont {Goodenough}}, \bibinfo {author} {\bibfnamefont
  {D.}~\bibnamefont {Elkouss}}, \bibinfo {author} {\bibfnamefont
  {S.}~\bibnamefont {Wehner}}, \bibinfo {author} {\bibfnamefont {T.~H.}\
  \bibnamefont {Taminiau}}, \ and\ \bibinfo {author} {\bibfnamefont
  {R.}~\bibnamefont {Hanson}},\ }\bibfield  {title} {\enquote {\bibinfo {title}
  {Loophole-free {Bell} test using electron spins in diamond: second experiment
  and additional analysis},}\ }\href {\doibase 10.1038/srep30289} {\bibfield
  {journal} {\bibinfo  {journal} {Scientific Reports}\ }\textbf {\bibinfo
  {volume} {6}},\ \bibinfo {pages} {30289} (\bibinfo {year}
  {2016})}\BibitemShut {NoStop}%
\bibitem [{\citenamefont {Hofmann}(2014)}]{thesisJulian14}%
  \BibitemOpen
  \bibfield  {author} {\bibinfo {author} {\bibfnamefont {J.}~\bibnamefont
  {Hofmann}},\ }\bibfield  {title} {\enquote {\bibinfo {title} {Heralded
  atom-atom entanglement},}\ }\href
  {https://xqp.physik.uni-muenchen.de/publications/theses_phd/phd_hofmann.html}
  {\bibfield  {journal} {\bibinfo  {journal} {PhD thesis}\ } (\bibinfo {year}
  {2014})}\BibitemShut {NoStop}%
\bibitem [{\citenamefont {Redeker}(2020)}]{thesisKai20}%
  \BibitemOpen
  \bibfield  {author} {\bibinfo {author} {\bibfnamefont {K.}~\bibnamefont
  {Redeker}},\ }\bibfield  {title} {\enquote {\bibinfo {title} {Entanglement of
  single rubidium atoms: From a {Bell} test towards applications},}\ }\href
  {https://xqp.physik.uni-muenchen.de/publications/theses_phd/phd_redeker.html}
  {\bibfield  {journal} {\bibinfo  {journal} {PhD thesis}\ } (\bibinfo {year}
  {2020})}\BibitemShut {NoStop}%
\bibitem [{\citenamefont {Hoeffding}(1956)}]{Hoeffding56}%
  \BibitemOpen
  \bibfield  {author} {\bibinfo {author} {\bibfnamefont {W.}~\bibnamefont
  {Hoeffding}},\ }\bibfield  {title} {\enquote {\bibinfo {title} {On the
  distribution of the number of successes in independent trials},}\ }\href
  {https://projecteuclid.org/euclid.aoms/1177728178} {\bibfield  {journal}
  {\bibinfo  {journal} {Annals of Math. Stat.}\ }\textbf {\bibinfo {volume}
  {27}},\ \bibinfo {pages} {713} (\bibinfo {year} {1956})}\BibitemShut
  {NoStop}%
\bibitem [{\citenamefont {Ilienko}(2013)}]{Ilienko13}%
  \BibitemOpen
  \bibfield  {author} {\bibinfo {author} {\bibfnamefont {A.}~\bibnamefont
  {Ilienko}},\ }\bibfield  {title} {\enquote {\bibinfo {title} {Continuous
  conterparts of poisson and binomial distributions and their properties},}\
  }\href@noop {} {\bibfield  {journal} {\bibinfo  {journal} {Annales Univ. Sci.
  Budapest., Sect. Comp.}\ }\textbf {\bibinfo {volume} {39}},\ \bibinfo {pages}
  {137} (\bibinfo {year} {2013})},\ \Eprint
  {http://arxiv.org/abs/1303.5990}{arXiv:1303.5990}\BibitemShut {NoStop}%
\bibitem [{\citenamefont {R.~van~de Ven}(1993)}]{vanDeVen93}%
  \BibitemOpen
  \bibfield  {author} {\bibinfo {author} {\bibfnamefont {N.~C.~W.}\
  \bibnamefont {R.~van~de Ven}},\ }\bibfield  {title} {\enquote {\bibinfo
  {title} {Bounds for the median of the negative binomial distribution},}\
  }\href {\doibase 10.1007/BF02613677} {\bibfield  {journal} {\bibinfo
  {journal} {Metrika}\ }\textbf {\bibinfo {volume} {40}},\ \bibinfo {pages}
  {185} (\bibinfo {year} {1993})}\BibitemShut {NoStop}%
\bibitem [{\citenamefont {McKague}(2016{\natexlab{b}})}]{McKague16}%
  \BibitemOpen
  \bibfield  {author} {\bibinfo {author} {\bibfnamefont {M.}~\bibnamefont
  {McKague}},\ }\bibfield  {title} {\enquote {\bibinfo {title} {Self-testing in
  parallel},}\ }\href {\doibase 10.1088/1367-2630/18/4/045013} {\bibfield
  {journal} {\bibinfo  {journal} {New J. Phys.}\ }\textbf {\bibinfo {volume}
  {18}},\ \bibinfo {pages} {045013} (\bibinfo {year}
  {2016}{\natexlab{b}})}\BibitemShut {NoStop}%
\bibitem [{\citenamefont {Coladangelo}(2017)}]{Coladangelo17b}%
  \BibitemOpen
  \bibfield  {author} {\bibinfo {author} {\bibfnamefont {A.~W.}\ \bibnamefont
  {Coladangelo}},\ }\bibfield  {title} {\enquote {\bibinfo {title} {Parallel
  self-testing of (tilted) {EPR} pairs via copies of (tilted) {CHSH}},}\ }\href
  {\doibase 10.26421/QIC17.9-10} {\bibfield  {journal} {\bibinfo  {journal}
  {Quantum Information and Computation}\ }\textbf {\bibinfo {volume} {17}},\
  \bibinfo {pages} {831} (\bibinfo {year} {2017})}\BibitemShut {NoStop}%
\bibitem [{\citenamefont {S\`abat}\ \emph {et~al.}(2020)\citenamefont
  {S\`abat}, \citenamefont {Sekatski}, \citenamefont {Pirker},\ and\
  \citenamefont {D\"ur}}]{Sabat20}%
  \BibitemOpen
  \bibfield  {author} {\bibinfo {author} {\bibfnamefont {F.~R.}\ \bibnamefont
  {S\`abat}}, \bibinfo {author} {\bibfnamefont {P.}~\bibnamefont {Sekatski}},
  \bibinfo {author} {\bibfnamefont {A.}~\bibnamefont {Pirker}}, \ and\ \bibinfo
  {author} {\bibfnamefont {W.}~\bibnamefont {D\"ur}},\ }\bibfield  {title}
  {\enquote {\bibinfo {title} {Entanglement purification by counting and
  locating errors with entangling measurements},}\ }\href@noop {} {\  (\bibinfo
  {year} {2020})},\ \Eprint
  {http://arxiv.org/abs/2011.07084}{arXiv:2011.07084}\BibitemShut {NoStop}%
\bibitem [{\citenamefont {Arnon-Friedman}(2020)}]{Arnon-Friedman20}%
  \BibitemOpen
  \bibfield  {author} {\bibinfo {author} {\bibfnamefont {R.}~\bibnamefont
  {Arnon-Friedman}},\ }\bibfield  {title} {\enquote {\bibinfo {title}
  {Device-independent quantum information processing},}\ }\href
  {https://doi.org/10.1007/978-3-030-60231-4} {\bibfield  {journal} {\bibinfo
  {journal} {Springer}\ } (\bibinfo {year} {2020})}\BibitemShut {NoStop}%
\end{thebibliography}%

\end{document}